\providecommand{\tabularnewline}{\\}
\theoremstyle{plain}
\newtheorem{lem}{\protect\lemmaname}
\theoremstyle{plain}
\newtheorem{prop}{\protect\propositionname}
\def\redherr{r}
\def\E{\mathbf{E}}
\def\ep{\varepsilon}
\def\E{\mathbb{E}}
\def\ta{\theta}
\def\da{\delta}
\providecommand{\lemmaname}{Lemma}
\providecommand{\propositionname}{Proposition}
\begin{document}
\global\long\def\E{\mathbb{E}}%
 
\global\long\def\da{\mathbb{\delta}}%
 
\global\long\def\ta{\theta}%
 
\global\long\def\ep{\varepsilon}%

\title{Screening $p$-Hackers: Dissemination
Noise as Bait\thanks{This research was made possible through the support of the Linde Institute at Caltech. Echenique also thanks the NSF's support through the grants SES-1558757 and CNS-1518941. We are grateful for comments from Sylvain Chassang, Eva Jin, Albert Ma, Pascal Michaillat, Marco Ottaviani, Nathan Yoder, and the audiences at the University of Pennsylvania, Caltech, Columbia University, Universidad de la Rep\'{u}blica, Boston University,  ASSA 2022, and ACM EC '22. Alfonso Maselli provided excellent research assistance.}}
\author{Federico Echenique\thanks{University of California, Berkeley. Email: \texttt{\protect\href{mailto:fede\%40econ.berkeley.edu}{fede@econ.berkeley.edu}}}
\and Kevin He\thanks{University of Pennsylvania. Email: \texttt{\protect\href{mailto:hesichao\%40gmail.com}{hesichao@gmail.com}}}}
\date{{\normalsize{}}%
\begin{tabular}{rl}
First version: & March 16, 2021\tabularnewline
This version: & 
March 31, 2024 %\tabularnewline
\end{tabular}
}

\maketitle
%\vspace*{-20bp}

\begin{abstract}
{\normalsize{}\thispagestyle{empty}
\setcounter{page}{0}}{\normalsize\par}

We show that adding noise before publishing data effectively screens $p$-hacked findings: spurious explanations produced by fitting many statistical models (data mining). Noise creates ``baits'' that affect two types of researchers differently. Uninformed $p$-hackers, who are fully ignorant of the true mechanism and engage in data mining, often fall for baits. Informed researchers, who start with an ex-ante hypothesis, are minimally affected. We show that as the number of observations grows large, dissemination noise asymptotically achieves optimal screening. In a tractable special case where the informed researchers' theory can identify the true causal mechanism with very little data, we characterize the optimal level of dissemination noise and highlight the relevant trade-offs.  Dissemination noise is a tool that statistical agencies currently use to protect privacy. We argue this existing practice can be repurposed to screen $p$-hackers and thus improve research credibility.

\medskip{}

\noindent \textbf{Keywords}: $p$-hacking, dissemination noise, screening, researcher degrees of freedom, research integrity
\end{abstract}

\newpage

\section{Introduction}

In the past 15 years, academics have become increasingly concerned with the harms of \textit{$p$-hacking}: researchers' degrees
of freedom that lead to spurious empirical findings. For the observational
studies that are common in economics and other social sciences, $p$-hacking
often takes the form of multiple testing: attempting  many regression specifications on
the same data with different explanatory variables, without an ex-ante hypothesis, and then selectively reporting
the results that appear statistically significant. Such $p$-hacked
results can lead to misguided and harmful policies, based on a mistaken understanding of the causal relationships between different variables. Recent developments
in data and technology have also made $p$-hacking easier:  today's
rich datasets often contain a large  number of covariates that can
be potentially correlated with a given outcome of interest, while
 powerful computers enable faster and easier specification-searching.

In this paper,  we propose to use \textit{dissemination noise} to address and mitigate the negative effects of $p$-hacking. Dissemination noise is pure white noise that is intentionally added to raw data before the dataset is made public. Statistical agencies, such as the US Census Bureau, already use dissemination noise to protect respondents' privacy.  Our paper suggests that dissemination noise may be repurposed to screen out $p$-hackers. Noise can limit the ability of $p$-hackers to ``game'' standards of evidence by presenting spurious but statistically significant results as genuine causal mechanisms.  We show that the right amount of noise can serve as an impediment to $p$-hacking, while minimally impacting honest researchers who use data to test an ex-ante hypothesis.

\subsection{$p$-Hacking}
Spurious results in many areas of science have been ascribed to the ability of researchers to, consciously or not, vary procedures and models to achieve statistically significant results. The reproducibility crisis in psychology has been blamed to a large extent on $p$-hacking \citep*{simmons2011false, simonsohn2014p, open2015estimating}.  \citet{Camerer1433} evaluate experiments in economics and   find a  significant number of experiments that do not replicate.\footnote{See also \cite{camerer2018evaluating} and \cite{altmejd2019predicting}. \cite*{imai2017common} find evidence against $p$-hacking in experimental economics.}

Most empirical work in economics and other social sciences are observational studies that use existing field data,  not experiments that produce new data. Observational studies lead to a different sort of challenge for research credibility, where  $p$-hacking  stems mostly from discretion in   choosing explanatory variables and econometric specifications. In experimental work, one remedy for $p$-hacking is pre-registration: researchers must describe their methods and procedures before data is collected. But this solution is not applicable for observational studies, because researchers may have already accessed the public dataset before pre-registering.

\subsection{Dissemination Noise}

Dissemination noise is currently used by major statistical agencies to protect people's privacy.  The US Census Bureau, for instance,  only  disseminates a noisy version of the data from the 2020 Census. The practice is not new. Previously, the Bureau has released a tool called ``On the map'' whose underlying data was infused with noise. Even earlier technologies for preserving respondent confidentiality like swapping data and imputing data can also be interpreted as noisy data releases. The contribution of our paper is to propose a new use for dissemination noise.

\subsection{Setup and Key Results}

We consider a society that wants to learn the true cause behind an outcome variable. Researchers differ in their expertise: some are  \textit{mavens} whose domain knowledge  narrows down the true cause to a small set of candidates, and others are \textit{hackers} with no prior information about the true cause. Researchers derive utility both from reporting the true cause and from  influencing policy decisions. So uninformed hackers have an incentive to ``game'' the system by using the data to fish for a covariate that would convince the policymaker. 

We show that dissemination noise can help screen  researcher expertise by  introducing spurious correlations that can be proven to be spurious. These noise-induced correlations act like \textit{baits} for $p$-hackers.  But at the same time, they also make the data less useful for the informed mavens who use the data to test a specific ex-ante hypothesis. 

We explore this trade-off in our model. We show that as the number of observations grows large, dissemination noise asymptotically achieves optimal screening. In a tractable special case where the informed researchers' theory can identify the true cause with very little data, we characterize the optimal level of dissemination noise and derive comparative statics. The key intuition is that \emph{a small amount of noise hurts hackers
more than mavens}. All researchers act strategically to maximize
their expected payoffs, but their optimal behavior differ. Mavens only entertain  a small number of hypotheses,
so a small amount of noise does not greatly affect their
chances of detecting the truth. Hackers, by contrast, rationally try
out a very large number of covariates because they have
no private information about the true cause.
The hackers' data mining amplifies the effect of even a small amount
of noise, making them more likely to fall for a bait and get screened
out. So, adding noise grants an extra informational advantage to the mavens, whose prior knowledge pinpoints a few candidate covariates.   The hackers get screened out precisely because they (rationally) $p$-hack out of complete ignorance about the true cause. 

We focus on a setting where the types of researchers primarily differ in terms of their expertise, not their incentives or biases. While our main results do allow the mavens and the hackers to assign different weights to correctly reporting the true cause versus influencing policy-making, our results are mainly driven by the fact that only the mavens have private information about the true cause. In terms of the classification of different kinds of $p$-hacking practices (see, for example, \cite{maccoun2017blind}), we focus on the problem of deterring  \emph{capitalization on chance}, where the researcher has no preconceived story but fishes around for anything that appears statistically significant in the data. We are not studying  \emph{confirmation bias}, where  a researcher with a preconceived story  looks for evidence that supports the story while discarding or downplaying evidence to the contrary.

We use a stylized model to represent researchers analyzing existing observational data for associations. Our intention is to explore a new channel for screening researcher expertise in a simple and tractable setup. Of course, the practical usefulness of dissemination noise will need to be  evaluated in more specific and realistic domains. Also,  our focus is on simple correlational studies that use existing data: other research designs such as experiments that acquire new data or sophisticated econometric methods that exploit special structure of the data to credibly infer causation are outside of the scope of this work. 

\subsection{Alternative Solutions to $p$-Hacking}
As already mentioned, the most common proposal to remedy  $p$-hacking is pre-registration. While it is a very good idea in many scientific areas,  it is of limited use for observational studies, which are ubiquitous in the social sciences. Not only does it preclude useful exploratory work, it is also impossible to audit or enforce because publicly available data  can be privately accessed by a researcher before pre-registration.

A second solution is to change statistical conventions  and make $p$-hacking more difficult. An extreme example is banning the use of statistical inference altogether \citep{woolston2015psychology}. A less drastic idea is contained in  \citet{benjamin2018redefine}, which proposes to lower the $p$-value threshold for statistical significance by an order of magnitude --- from  5\% to 0.5\%. Of course this makes $p$-hacking harder, but a $p$-hacker armed with a sufficiently ``wide'' dataset and cheap enough computation power can discover spurious correlations that satisfy any significance threshold. We address this idea within our model and argue that our proposed use of dissemination noise is largely complementary to requiring more demanding statistical significance. 

An idea related to our proposal is simply to reserve data for out-of-sample testing. Typically, the observations are partitioned into two portions. One portion is released publicly, and the rest is a ``hold-out'' dataset reserved for out-of-sample testing. We instead focus on a model of noise where each observation of each covariate is independently perturbed, which more closely resembles the kind of dissemination noise currently in use for privacy purposes. Our central message is that the current implementation of noise can be repurposed to screen out $p$-hacking. In addition, the kind of dissemination noise we study may be more applicable for datasets where the observations are not generated from an i.i.d. process, and thus it is less reasonable to designate some observations as  a hold-out dataset (e.g., observations are different days in a time series).

The out-of-sample approach is the focus of \citet{Dwork636}. We differ in that we consider a world with two kinds of researchers and the dissemination noise here serves a screening role  to separate the two types who act strategically to maximize their expected payoffs. 

\subsection{Related Literature}

In economics, there is a recent strand of literature that seeks to understand the incentives and trade-offs behind $p$-hacking. 
\cite*{henry2009strategic,felgenhauer2014strategic,felgenhauer2017bayesian,di2017strategic,henry2019research,mccloskey2022critical} all study different games between a researcher (an agent) and a receiver (a principal). The agent has access to some  $p$-hacking technology, which takes various forms such as repeatedly taking private samples and then selectively reporting a subset of favorable results to the principal, or  sampling publicly but strategically stopping when  ahead. These papers seek to  better understand the equilibrium interaction between $p$-hacking agents and their principals, and study how such interactions are affected by variations in the hacking technology. 

This literature differs from our work in two ways. First, they consider the case where hacking is costly. On this dimension, these papers about $p$-hacking are related to the broader literature on ``gaming,'' where agents can undertake costly effort to improve an observable signal (here, the $p$-value) beyond its natural level (e.g.,  \cite{frankel2019muddled}). We instead consider hackers who incur zero cost from $p$-hacking, motivated by our focus on researchers who data mine an existing dataset (which is essentially free with powerful computers). Absent any interventions, equilibria  with zero hacking or gaming cost would be uninteresting. Our focus is instead on a specific intervention, dissemination noise, that can help screen out the $p$-hackers even though they face no hacking costs.  The second difference is that these papers do not consider the problem of expertise screening. In our world, the principal's main problem is to provide sufficiently informative data to agents who have expertise  while distorting the data enough to mislead another type of agent who try to make up for their lack of expertise with $p$-hacking.

\cite*{di2017persuasion} also study a game between a $p$-hacker and a principal, but give the agent some private information and the ability to select an area to do research in. This is a mechanism for hacking that is outside of the scope of our paper.

\section{\label{sec:model}Model and Asymptotically Optimal Screening Using Dissemination Noise}

\subsection{The Baseline Model}

We propose a model that captures the essence of how dissemination noise allows for expertise screening in an environment where non-expert agents can $p$-hack,  while keeping the model tractable enough to allow for analytic solutions.

\subsubsection{The Raw Dataset}

Consider an environment where each unit of observation is associated with an outcome  $Y$ and a set $A$ of potential causal covariates $(X^{a})_{a\in A}$. The outcome and each covariate is binary. Suppose the dataset is ``wide,'' so the set of potential causes for the outcome is large relative to the number of observations. In fact, we assume  a continuum of covariates; so $A=[0,1]$. For instance, the covariates may indicate the presence or absence of different SNPs in a person's genetic sequence, while the outcome refers to the presence or absence of a certain disease. 

There is one covariate $a^{*} \in A$, the \emph{true cause}, with $\mathbb{P}[X^{a^{*}}=Y]=\psi$ for some $\psi\in (1/2,1]$. So the true cause is positively correlated with the outcome, but it may not be perfectly correlated.  For instance,  $a^{*}$ is the one SNP that causes the disease in question. There is also a \emph{red herring}  covariate $a^{\redherr}\in A$ that is independent of $Y$.  The red herring 
represents a theoretically plausible mechanism for the outcome $Y$
that can only be disproved with data. For instance,  $a^{\redherr}$ might be a SNP that seems as likely to cause the disease as $a^*$ based on a biological theory about the roles of different SNPs. 

Nature draws the true cause $a^{*}$ and the red herring $a^{\redherr}$, independently and uniformly from $A$. Then Nature generates the raw dataset $(Y_n, (X_n^a)_{a \in [0,1]})$ for observations $1 \le n \le N$. First, the values of the true cause  in the $N$ observations $(X_{n}^{a^{*}})_{1\le n\le N}$  are generated independently,
each equally likely to be 0 or 1. Then, each $Y_{n}$ is generated
to match $X_{n}^{a^{*}}$ with probability $\psi$  for $1\le n\le N$, independently
across $n.$ Finally, covariates $X_{n}^{a}$ for $a\ne a^{*}$, $1\le n\le N$
are generated, each equally likely to be 0 or 1, independent of each
other and of all other random variables. (So there is a continuum
of independent Bernoulli random variables.) Equivalently,
once $a^{*}$ and $a^{\redherr}$ are drawn, we have fixed
a joint distribution between $Y$ and the covariates $(X^{a})_{a\in A}$,
and the raw dataset consists of $N$ independent draws from this joint distribution. For instance, this may represent a dataset that shows the complete genetic sequences of $N$ individuals and whether each person suffers from the disease. 

\subsubsection{Players and Their Incentives}

There are three players in the model: a principal, an agent, and a
policymaker. The \emph{principal} owns the raw dataset, but lacks
the ability to analyze the data and cannot influence policy-making
norms. The principal disseminates a noisy version of the dataset,
which we describe below. The \emph{agent} uses the disseminated data
to propose a covariate, $\hat{a}$. Finally, a \emph{policymaker}
evaluates the agent's proposal on the raw dataset using an exogenous
test. We think of the agent as proposing an intervention: if this
proposal passes, the policymaker will implement a policy that changes
$X^{\hat{a}}$ in order to affect the value of $Y$. In the background, we implicitly assume that the principal grants the policymaker access to the raw data to conduct the test. (In Section \ref{sec:dynamicproblem}, we model the principal periodically noisy versions of the raw data for these tests. Such data releases will diminish the principal's ability to screen out $p$-hackers in the future.)

The policymaker's role is mechanical, and restricted to deciding whether
the agent's proposal passes an exogenous test. We say that the proposal
$a$ \emph{passes} if the covariate $X^{a}$ equals the outcome $Y$
in $M=\left\lfloor \gamma\cdot N\right\rfloor $ out of $N$ observations,
and that it \emph{fails} otherwise. The parameter $\gamma$ is an
exogenous passing threshold with $1/2<\gamma<\psi$. The
policymaker will adopt a policy proposal if and only if it passes
the test on the raw data. Passing the test does not require $a$ to be the true cause of $Y$, for we could have some
covariate $a \ne a^*$ where $Y_{n}=X_{n}^{a}$ for at least $M$ observations by random
chance.\footnote{There is no reward in our model for disproving a hypothesis.}

The agent is either a maven (with probability $1-h$) or a hacker (with probability $h$). Mavens and hackers differ in their expertise. A maven knows that the true cause is either $a^{*}$ or $a^{\redherr}$, and assigns them equal probabilities, but a hacker is ignorant about the realizations of $a^{*}$ and $a^{\redherr}$. The idea is that a maven uses domain knowledge (e.g., biological theory about the disease $Y$)
to narrow down the true cause to the set $\{a^{*},a^{\redherr}\}$. A hacker, by contrast, is completely uninformed about the mechanism causing $Y$. 

The agent's payoffs reflect both a desire for reporting the true cause and a desire for policy impact. If a type $\theta$ agent proposes $a$ when the true cause is $a^{*}$, then his payoff is 
\[
w_{\theta}\cdot\boldsymbol{1}_{\{a=a^{*}\}}+(1-w_{\theta})\cdot\boldsymbol{1}_{\{\text{at least }\left\lfloor \gamma \cdot N\right\rfloor \text{ observations }n\text{ have }Y_{n}=X_{n}^{a}\}}.
\]
Here we interpret $\boldsymbol{1}_{\{a=a^{*}\}}$ as the effect of
proposing $a$ on the agent's long-run reputation when the true cause
$a^{*}$ of the outcome $Y$ eventually becomes known. The other summand 
models the agent's gain from proposing a policy that passes the policymaker's test and gets
implemented. The relative weight $w_{\theta}\in[0,1]$ on
these two components may differ for the two types of agent.  Our main results
in this section are valid for any  values of $w_{\text{maven}}$ and $w_{\text{hacker}}$ in $[0,1]$, but some later results in Section \ref{sec:special} will require restrictions on $w_{\text{maven}}$.

The principal obtains a payoff of $1$ if a true cause passes, a payoff
of $-1$ if any other $a\ne a^{*}$ passes, and a payoff of 0 if the
agent's proposal is rejected. The principal's payoff reflects an objective of maximizing the
positive policy impact of the research done on their data.

\subsubsection{Dissemination Noise}

The principal releases a noisy dataset $\mathcal{D}(q)$ by perturbing
the raw data. Specifically, they choose a \emph{level of noise} $q\in[0,1/2]$ and
every binary realization of each covariate is flipped independently with
probability $q$. So the noisy dataset $\mathcal{D}(q)$ is $(Y_{n},(\hat{X}_{n}^{a})_{a\in A})$,
where $\hat{X}_{n}^{a}=X_{n}^{a}$ with probability $1-q$, and $\hat{X}_{n}^{a}=1-X_{n}^{a}$
with probability $q$. The principal's choice of $q$ is common knowledge. A covariate $a$ that matches the outcome in at least $M$ observations in the noisy dataset but would not pass the policymaker's test --- that
is $\hat{X}_{n}^{a}=Y$ for  at least $M$  observations  but $X_{n}^{a}=Y$
for fewer than $M$ observations --- is called a \emph{bait}. 

The form of noise in our model is  motivated by the  dissemination noise currently
in use by statistical agencies, like the US Census Bureau. One could imagine other ways of generating a ``noisy'' dataset, such as
selecting a random subset of the observations and making them fully uninformative, which corresponds to reserving the selected observations
as a ``hold-out'' dataset for out-of-sample testing. Our analysis explores the possibility of repurposing the existing practice of adding dissemination noise, which more closely resembles perturbing each data entry independently than withholding
some rows of the dataset altogether.\footnote{For instance, the Bureau publishes
the annual Statistics of U.S. Businesses that contains payroll and employee data of small
U.S. businesses. Statisticians at the Bureau say that separately adding noise to each business
establishment's survey response provides ``an alternative to cell
suppression that would allow us to publish more data and to fulfill
more requests for special tabulations'' \citep*{evans1998using}. The dataset has
been released with this form of  noise since 2007 \citep{SUSB}.
For the 2020 Census data, the Bureau will add  noise  through the new differentially private TopDown Algorithm that replaces the previous methods of data
suppression and data swapping \citep{DP_presentation_slides}.}

\subsubsection{Remarks about the Model}
%We make four remarks about the model.
We comment on our assumptions regarding the agents and the data in our model.

First, our model features very powerful $p$-hackers. A fraction $h$ of researchers are totally ignorant about the true cause, 
but they are incentivized to game the system by fishing for some covariate that plausibly explains the outcome variable and passes
the policymaker's test. This kind of $p$-hacking by multiple hypothesis testing is made easy by the fact that hackers have a continuum of
covariates to search over and incur no cost from data mining. Our assumptions 
represent today's ``wide'' datasets and powerful computers that enable
ever easier $p$-hacking. Our analysis suggests that dissemination noise can improve social welfare, even in settings where $p$-hacking is costless.

Second, the principal is an entity that wishes to maximize the positive social
impact of the research done using their data, but has limited power in influencing
the institutional conventions surrounding how research results are evaluated and implemented into
policies. In the model, the principal cannot change the policymaker's test. Examples include private firms like 23andMe that possess a unique dataset but have little
say in government policy-making, and agencies like the US Census Bureau that are charged with data collection
and data stewardship but do not directly evaluate research conclusions. Such organizations already
introduce intentional noise in the data they release for the purpose of protecting individual privacy, so they may
be willing to use the same tool to improve the quality of policy interventions guided by studies done on their data. In line with this interpretation of the principal, they cannot influence the research process or the policymaker's decision, except through changing the  quality of the disseminated data. In particular, 
the principal cannot impose a cost on the agent to submit a proposal to the policymaker, write a contract to punish an agent who proposes a misguided policy, or change the protocols surrounding how proposals get tested and turned into policies.

Third, the dataset in our model contains just one outcome variable,
but in reality a typical dataset (e.g., the US Census data) contains
many outcome variables and can be used to address many different questions.
We can extend our model to allow for a countably infinite number of
outcome variables $Y^{1},Y^{2},\ldots$, with each outcome associated
with an independently drawn true cause and red herring. After the
principal releases a noisy version of the data, one random outcome
becomes  relevant and the agent proposes a model for this
specific outcome. Our analysis remains unchanged in this world. This more realistic setting provides a foundation for the principal
not being able to screen the agent types by eliciting their private
information about the true cause without giving them any data. Who is a maven depends on
the research question and the outcome variable being studied, and it is infeasible to test a researcher's domain expertise
with respect to every conceivable future research question. 

Fourth, the policymaker's exogenous test only evaluates how well the  agent's proposal explains the raw dataset, and does not provide the agent any other way to communicate his domain expertise. Such a convention may arise if domain expertise is complex and difficult to  convey credibly: for instance, an uninformed hacker who has found a strong association in the data  can always invent a plausible-sounding story to justify why a certain covariate causes the outcome. We also assume that the policymaker's test is mechanically set and does not adjust to the presence of $p$-hackers. This represents a short-run stasis in the science advocacy process or publication norms --- for instance, while we know how to deal with multiple hypotheses testing, a vast majority of academic journals today still treat $p < 0.05$ as a canonical cutoff for statistical significance. Our analysis suggests that dissemination noise can help screen out misguided policies in the short run, when the principal must take as given a policymaking environment that has not adapted to the possibility of $p$-hacking. 

To conclude, our basic model is meant to isolate the tradeoff between the cost imposed by noise on honest researchers and the benefit of screening $p$-hackers. A discussion of how the results are affected by relaxing our assumptions is in Section~\ref{sec:conclusion}.

\subsection{Screening Using Noise}

We first derive the optimal behavior of
the hacker and the maven. 

\begin{lem}
\label{lem:agent_behavior_new}For any $q\in[0,1/2)$, it is optimal for
the hacker to propose any $a\in A$ that satisfies $\hat{X}_{n}^{a}=Y_{n}$
for every $1\le n\le N$. It is optimal for the maven to either propose
 $a \in \{a^{*}, a^{\redherr}\}$ that maximizes the
number of observations $n$ for which $\hat{X}_{n}^{a}=Y_{n}$ (and
randomize uniformly between the two covariates if there is a tie) or to propose any $a\in A$ that satisfies $\hat{X}_{n}^{a}=Y_{n}$
for every $1\le n\le N$. 
\end{lem}

Given the policymaker's exogenous test, hackers find it optimal to ``maximally $p$-hack.'' Depending on the relative weight $w_\text{maven}$ that mavens put on reporting the true cause, they will either use the noisy data to decide between their two true-cause candidates or engage in $p$-hacking. If the principal releases data without noise, then hackers will propose a covariate that is perfectly correlated with $Y$ in the raw data. This covariate passes the policymaker's test, but it leads to a misguided policy with probability 1. The payoff to the principal from releasing the data without noise is therefore no larger than $1-2h$. 

In fact, the principal cannot hope for an expected payoff higher
than $1 - h$. This first-best benchmark corresponds to the
policymaker always implementing the correct policy when
the agent is a maven and not implementing any policy when
the agent is a hacker. We show that with an appropriate level of dissemination noise, the principal's expected payoff  approaches this first-best benchmark as the number of observations grows large. 

\begin{prop}\label{prop:asymptotic_general} 
    For every $q$ with $1-\gamma<q<1/2$, the principal's payoff from using dissemination noise $q$ converges to $1-h$ as $N \to \infty$.
\end{prop}

That is to say, dissemination noise is asymptotically optimal among all mechanisms for screening
the two agent types, including mechanisms that take on more complex forms that we
have not considered in our analysis.

The intuition is that noise does not prevent the
agent from finding a  policy that passes the policymaker's test if his private information narrows down the true covariate to a small handful of candidates. But if the agent
has a very large set of candidate covariates, then there is a good chance
that the noise turns several covariates from this large set into baits. For example, if $N=100$, $\psi = 0.95$, $\gamma = 0.9$,  and $q=0.15,$ a covariate that perfectly correlates with $Y$ in the noisy dataset has a 90\%  probability of being a bait. In the same environment, a maven who restricts attention to only two covariates  ($a^*$ and $a^{\redherr}$) and proposes the covariate that correlates more with the outcome only fails the policymaker's test about 1\% of the time. (As $N$ grows for a fixed value of $q$ in the range given by Proposition \ref{prop:asymptotic_general}, the probability of a maven  proposing a covariate other than  $a^*$ or $a^{\redherr}$ under his optimal strategy converges to zero). Hackers fall for baits at a higher rate than mavens because
they engage in $p$-hacking and try out multiple hypotheses. Yet $p$-hacking is the hackers' best
response, even though they know that the dataset contains baits. 

While Proposition \ref{prop:asymptotic_general} applies asymptotically, the next result gives a lower bound on the number of observations such that  a given level of  noise is better than not adding any noise. 

\begin{prop}\label{prop:N_bound} 
For every $q$ with $1-\gamma<q<1/2,$ the principal gets higher expected
payoff with $q$ level of noise than with zero noise whenever 
\begin{align*}
N\ge\max & \left\{ \frac{-\ln(h/8)}{2(q+\gamma - 1)^{2}},\frac{-2\ln(h/32)}{(\psi(1-q)+(1-\psi)q-0.5)^{2}},\right.\\
 & \left.\frac{-\ln(h/16)}{2(\psi-\gamma)^{2}}\right\} .
\end{align*}

\end{prop}

For example, when $\psi = 0.95$, $\gamma = 0.9$, $h=0.1$, this result says $q=0.15$ is better than $q=0$ whenever $N \ge 1016$. 

\section{Optimal Dissemination Noise in a Special Case}\label{sec:special}

We now turn to a tractable special case where we can characterize the optimal level of noise with any finite number $N$ of observations. We make two modifications relative to the baseline model discussed before. 

First, we suppose the environment is such that the maven's theory only requires a minimal amount of data to distinguish $a^*$ from $a^\redherr$. Specifically, suppose we always have  $Y_n = X^{a^*}_n$ and  $X^{a^\redherr}_n = 1 - X^{a^*}_n$ for every observation $n$. Unlike in the baseline model, the true cause is now  perfectly correlated with the outcome $Y$ and  perfectly negatively correlated with the red herring $X^{a^\redherr}$.  We think of $X^{a^*}$ as the causal covariate that determines the values of both $X^{a^\redherr}$ and $Y$. As before, the principal gets 1 if a proposal targeting $a^*$ passes, -1 if any other proposal passes, and 0 if the proposal is rejected. Note that even though $X^{a^\redherr}$ is perfectly negatively correlated with the outcome, it does not cause the outcome. So a policy intervention that changes $X^{a^\redherr}$ is as ineffective at changing the outcome as a policy targeting any other covariate $a \ne a^*$. 

Second, we suppose the policymaker uses the most stringent test. The proposal $a$ passes if and only if $Y_{n}=X_{n}^{a}$
for all $1\le n\le N$. (The principal can only do worse if the policymaker uses a more lenient test, as we will later show.)

As before, suppose agents maximize a weighted sum between reporting the true cause and passing the policymaker's test. Given the form of the test, the type $\theta$ agent's utility from proposing $a$ when the true cause is $a^*$ is: 
\[
w_{\theta}\cdot\boldsymbol{1}_{\{a=a^{*}\}}+(1-w_{\theta})\cdot\boldsymbol{1}_{\{Y_{n}=X_{n}^{a} \text{ for every } 1\leq n\leq N\}}.
\]
We suppose $0 \le w_{\text{hacker}} \le 1$  and $1/2 <  w_{\text{maven}} \le 1$.
\begin{lem}
\label{lem:agent_behavior}For any $q\in[0,1/2]$, it is optimal for
the hacker to propose any $a\in A$ that satisfies $\hat{X}_{n}^{a}=Y_{n}$
for every $1\le n\le N$, and it is optimal for the maven to propose
 $a \in \{a^{*}, a^{\redherr}\}$ that maximizes the
number of observations $n$ for which $\hat{X}_{n}^{a}=Y_{n}$ (and
randomize uniformly between the two covariates if there is a tie). 
\end{lem}
When the agents follow the optimal behavior described in Lemma~\ref{lem:agent_behavior}, the principal's expected
utility from choosing noise level $q$ is $-hV_{\text{hacker}}(q)+(1-h)V_{\text{maven}}(q)$,
where $V_{\theta}(q)$ is the probability that an agent of type $\theta$'s proposal
passes the policymaker's test in the raw data, when the noise level is $q$. The next result formalizes the core idea that a small
amount of noise harms the hackers more than the mavens. 
\begin{lem}
\label{lem:lemmarginalutility} 
%If $V_{i}(q)$ is the probability
%that agent type $i$'s proposal passes the policy maker's test
%in the raw data, then 
$V_{\text{maven}}^{'}(q)=-\binom{2N-1}{N}Nq^{N-1}(1-q)^{N-1}$
and $V_{\text{hacker}}^{'}(q)=-N(1-q)^{N-1}$. In particular, $V_{\text{maven}}^{'}(0)=0$
while $V_{\text{hacker}}^{'}(0)=-N$. 
\end{lem}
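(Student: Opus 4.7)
The plan is to derive $V_{\text{hacker}}(q)$ and $V_{\text{maven}}(q)$ in closed form using the optimal agent behavior given in Lemma \ref{lem:agent_behavior}, then differentiate. For the hacker, Lemma \ref{lem:agent_behavior} says he proposes some $a$ with $\hat X_n^a = Y_n$ for all $n$, and since $A=[0,1]$ contains a continuum of such candidates his chosen $a$ is almost surely neither $a^*$ nor $a^{\redherr}$. For such a generic $a$ the entries $X_n^a$ are i.i.d. Bernoulli$(1/2)$ independent of $Y$, so Bayes' rule gives $\Pr(X_n^a = Y_n \mid \hat X_n^a = Y_n) = \frac{(1/2)(1-q)}{(1/2)(1-q)+(1/2)q} = 1-q$, independently across $n$. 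Since passing the raw test requires $X_n^a = Y_n$ for every $n$, this yields $V_{\text{hacker}}(q) = (1-q)^N$, hence $V_{\text{hacker}}'(q) = -N(1-q)^{N-1}$ and in particular $V_{\text{hacker}}'(0) = -N$.

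For the maven I would introduce $F_*$ and $F_r$, the numbers of flipped entries among the noisy records of $a^*$ and $a^{\redherr}$ respectively; these are independent Binomial$(N,q)$. Because $Y_n = X_n^{a^*}$ and $Y_n = 1-X_n^{a^{\redherr}}$, the covariate $a^*$ has $N-F_*$ matches in the noisy data while $a^{\redherr}$ has $F_r$ matches. Lemma \ref{lem:agent_behavior} then says the maven picks $a^*$ strictly when $F_* + F_r < N$ and with probability $1/2$ when $F_* + F_r = N$; only a choice of $a^*$ passes the raw test. Writing $S = F_* + F_r \sim \text{Binomial}(2N,q)$, this gives the closed form $V_{\text{maven}}(q) = \Pr(S \le N-1) + \tfrac{1}{2}\Pr(S = N)$.

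It remains to differentiate this expression. I would apply the standard Binomial-CDF derivative identity $\tfrac{d}{dq}\Pr(\text{Bin}(n,q) \le k) = -n\binom{n-1}{k}q^k(1-q)^{n-1-k}$ (a one-line telescoping via $j\binom{n}{j}=n\binom{n-1}{j-1}$ and $(n-j)\binom{n}{j}=n\binom{n-1}{j}$) with $n=2N,\ k=N-1$, and differentiate the tie-breaking term $\tfrac{1}{2}\binom{2N}{N}q^N(1-q)^N$ directly, extracting a factor of $Nq^{N-1}(1-q)^{N-1}(1-2q)$. Using the identity $\binom{2N}{N} = 2\binom{2N-1}{N-1} = 2\binom{2N-1}{N}$, a common factor $N\binom{2N-1}{N}q^{N-1}(1-q)^{N-1}$ pulls out, leaving the bracket $[-2(1-q)+(1-2q)] = -1$. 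This produces $V_{\text{maven}}'(q) = -\binom{2N-1}{N}Nq^{N-1}(1-q)^{N-1}$, and the boundary value $V_{\text{maven}}'(0) = 0$ follows by inspection. The only real obstacle is this algebraic cancellation, which forces the use of the two binomial identities above; the probabilistic content is entirely in extracting the closed forms for $V_{\text{hacker}}$ and $V_{\text{maven}}$.
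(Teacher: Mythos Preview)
Your proposal is correct and follows essentially the same route as the paper: you obtain $V_{\text{hacker}}(q)=(1-q)^N$ and $V_{\text{maven}}(q)=\Pr[\text{Bin}(2N,q)\le N-1]+\tfrac{1}{2}\Pr[\text{Bin}(2N,q)=N]$ from Lemma~\ref{lem:agent_behavior}, then differentiate via the Binomial-CDF identity and simplify using $\binom{2N}{N}=2\binom{2N-1}{N}$. The only differences are cosmetic---you phrase the hacker computation through Bayes' rule and name the flip counts $F_*,F_r$ explicitly---but the argument and algebra coincide with the paper's.
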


We can show that the principal's overall objective $-hV_{\text{hacker}}(q)+(1-h)V_{\text{maven}}(q)$
is strictly concave, and therefore the first-order condition characterizes
the optimal $q$, provided the solution is interior: 
\begin{prop}
\label{prop:mainstatic} If $\frac{h}{1-h}\le\binom{2N-1}{N}(1/2)^{N-1}$
then the optimal noise level is 
$q^{*}=\left(\frac{h}{1-h}\frac{1}{\binom{2N-1}{N}}\right)^{1/(N-1)}.$
More noise is optimal when there are more hackers and less is optimal
when there are more observations. If $\frac{h}{1-h}\ge\binom{2N-1}{N}(1/2)^{N-1}$
then the optimal noise level is $q^{*}=1/2$. 
\end{prop}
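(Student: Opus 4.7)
The plan is to treat the principal's objective $\Pi(q) := -h V_{\text{hacker}}(q) + (1-h)V_{\text{maven}}(q)$ as a function on $[0,1/2]$ and analyze its derivative using Lemma~\ref{lem:lemmarginalutility}. The derivative admits a clean factorization that will make $\Pi$ single-peaked, so a first-order condition pins down the interior maximum while a boundary check handles the corner $q^*=1/2$.

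Substituting the derivative formulas from Lemma~\ref{lem:lemmarginalutility},
\[
\Pi'(q) = hN(1-q)^{N-1} - (1-h)\binom{2N-1}{N} N q^{N-1}(1-q)^{N-1} = N(1-q)^{N-1}\Bigl[\,h - (1-h)\binom{2N-1}{N} q^{N-1}\,\Bigr].
\]
On $[0,1/2]$ the prefactor $N(1-q)^{N-1}$ is strictly positive, so the sign of $\Pi'(q)$ is the sign of the bracket. The bracket equals $h>0$ at $q=0$ and is strictly decreasing in $q$ (because $q^{N-1}$ is strictly increasing for $N\ge 2$), so it changes sign at most once, from positive to negative, making $\Pi$ single-peaked. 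If the bracket is still nonnegative at $q=1/2$, i.e., $\frac{h}{1-h} \ge \binom{2N-1}{N}(1/2)^{N-1}$, then $\Pi$ is weakly increasing throughout and the corner $q^*=1/2$ is optimal. Otherwise the bracket has a unique root in $(0,1/2)$, and solving $h = (1-h)\binom{2N-1}{N}(q^*)^{N-1}$ yields the stated closed form, which single-peakedness certifies as the global maximizer.

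For the comparative statics, the dependence on $h$ is immediate because $h/(1-h)$ is strictly increasing in $h$, so raising $h$ strictly raises the right-hand side of the defining equation and hence $q^*$. The monotonicity in $N$ I would read off the formula by noting that $\binom{2N-1}{N}$ grows roughly like $4^{N-1}$, which dominates inside the $1/(N-1)$-th root and pushes $q^*$ down as $N$ grows in the relevant parameter range. I expect this last step to be the most delicate part of the argument: a direct implicit differentiation gives $\partial q^*/\partial N$ proportional to $-q^*\log(4q^*)$, so strict monotonicity in $N$ really requires $q^*>1/4$, a caveat worth flagging; the rest of the proof is essentially mechanical once Lemma~\ref{lem:lemmarginalutility} is in hand.
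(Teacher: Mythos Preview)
Your approach is the paper's: compute $\Pi'(q)$ from Lemma~\ref{lem:lemmarginalutility}, solve the first-order condition, and read the comparative statics off the closed form. Your factorization $\Pi'(q)=N(1-q)^{N-1}\bigl[h-(1-h)\binom{2N-1}{N}q^{N-1}\bigr]$ and the resulting single-peakedness argument are in fact tidier than the paper, which in the text asserts strict concavity but in the proof simply sets the FOC to zero without justifying that this yields the global maximizer or handling the corner case.

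Your caution on the $N$ comparative static is well placed and goes beyond the paper. The paper argues that because $\binom{2N-1}{N}$ increases in $N$, both the base and the exponent $1/(N-1)$ in the formula for $q^*$ decrease, hence $\partial q^*/\partial N<0$. That inference is invalid: in the interior regime the base is below $1$, so lowering the exponent \emph{raises} $q^*$, and the two effects oppose each other. Your implicit-differentiation sketch is the right diagnosis---using $\binom{2N-1}{N}\sim c\,4^{N}/\sqrt{N}$ one finds $\partial q^*/\partial N$ has the sign of $-\log(4q^*)$, so monotonicity holds only when $q^*>1/4$. This is not a mere technicality: for small $h$ (e.g.\ $h=0.01$) one computes $q^*\approx 0.0034$ at $N=2$ but $q^*\approx 0.032$ at $N=3$, so the unqualified claim fails. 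Flagging this caveat, as you do, is the correct move.
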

Proposition~\ref{prop:mainstatic} gives the optimal dissemination noise in closed form.
With more hackers, screening out their misguided policies becomes
more important, so the optimal noise level increases. With more observations,
the same level of noise can create more baits, so the principal can
dial back the noise to provide more accurate data to help the mavens.

\subsection{Dissemination Noise and $p$-Value Thresholds}\label{sec:pvaluethresholds}

Now suppose the principal can choose both
the level of noise $q\in[0,1/2]$ and a passing threshold $\underline{N}\in\{1,\ldots,N\}$
for the test, so that a proposal passes whenever $X_{n}^{a}=Y_{n}$
for at least $\underline{N}$ out of the $N$ observations. 
\begin{prop}
\label{prop:optimal_threshold} When the principal can optimize over
both the passing threshold and the noise level, the optimal threshold
is $\underline{N}=N$, and the optimal noise level is the same as in
Proposition \ref{prop:mainstatic}.
\end{prop}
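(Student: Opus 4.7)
The plan is to show that, for any fixed noise level $q \in [0, 1/2]$, the principal's objective is weakly increasing in $\underline{N}$ and strictly so for $q > 0$, so the optimal threshold is $\underline{N} = N$; the remaining program in $q$ is then exactly the one solved in Proposition~\ref{prop:mainstatic}. Write $V_i(q, \underline{N})$ for the probability that type $i$'s proposal passes the test.

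The maven's behavior is essentially unchanged from Lemma~\ref{lem:agent_behavior}. Because $X_n^{a^*} = Y_n$ and $X_n^{a^{\redherr}} = 1 - Y_n$ in every raw observation, proposing $a^*$ passes every threshold $\underline{N} \le N$ while proposing $a^{\redherr}$ fails every threshold $\underline{N} \ge 1$. Hence the payoff from $b \in \{a^*, a^{\redherr}\}$ equals $1$ if $b = a^*$ and $0$ otherwise, so the maven's expected utility collapses to $\Pr(b = a^* \mid \text{noisy data})$. Any off-menu $c \notin \{a^*, a^{\redherr}\}$ has zero posterior probability of being $a^*$ and yields at most $1 - w < 1/2$ (using $w > 1/2$), which is dominated because the better element of $\{a^*, a^{\redherr}\}$ has posterior at least $1/2$. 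Thus the maven still picks from $\{a^*, a^{\redherr}\}$ the element with more noisy matches, and $V_{\text{maven}}(q, \underline{N}) = V_{\text{maven}}(q)$ for every $\underline{N} \ge 1$.

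For the hacker, a generic $a \notin \{a^*, a^{\redherr}\}$ has $X_n^a \sim \mathrm{Bernoulli}(1/2)$ independent of $Y_n$, so Bayes' rule yields $\Pr(X_n^a = Y_n \mid \hat{X}_n^a = Y_n) = 1 - q$ and $\Pr(X_n^a = Y_n \mid \hat{X}_n^a \ne Y_n) = q$, independently across $n$. If the hacker proposes a covariate with $k$ noisy matches, the number of raw matches is distributed as $\mathrm{Bin}(k, 1-q) + \mathrm{Bin}(N-k, q)$ with independent summands. Since $1 - q \ge q$, a monotone coupling (quantile-couple the $(k+1)$-th summand as one goes from $k$ to $k+1$) shows this distribution is stochastically increasing in $k$, so any $k < N$ is dominated; with a continuum of covariates the hacker picks an $a$ with $k = N$ almost surely. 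Therefore $V_{\text{hacker}}(q, \underline{N}) = \Pr(\mathrm{Bin}(N, 1-q) \ge \underline{N})$, which is (weakly) decreasing in $\underline{N}$ on $\{1, \dots, N\}$ and strictly so for $q > 0$ because successive terms differ by $\binom{N}{\underline{N}}(1-q)^{\underline{N}} q^{N-\underline{N}} > 0$.

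Combining the two pieces, for every $q \in (0, 1/2]$ the principal's objective $-h V_{\text{hacker}}(q, \underline{N}) + (1 - h) V_{\text{maven}}(q, \underline{N})$ is strictly increasing in $\underline{N}$ (the hacker term strictly improves, the maven term is unchanged), so the optimal threshold is $\underline{N} = N$; at $q = 0$ the principal is indifferent across $\underline{N}$ but $q = 0$ is never optimal. Substituting $\underline{N} = N$ reduces the program to the one of Proposition~\ref{prop:mainstatic}, which delivers the stated $q^*$. The main step that requires care is the hacker's optimization under a coarser test --- once the stochastic-dominance coupling pins down $k = N$ as the hacker's best response, the rest is bookkeeping.
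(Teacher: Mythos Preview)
Your proof is correct and follows essentially the same approach as the paper: show that the maven's passing probability is independent of $\underline{N}$ (since $a^*$ passes every threshold and $a^{\redherr}$ fails every threshold), show via first-order stochastic dominance that the hacker still proposes a covariate with $N$ noisy matches regardless of $\underline{N}$, and conclude that lowering $\underline{N}$ only helps the hacker. Your treatment is slightly more detailed than the paper's (the explicit coupling argument and the off-menu comparison for the maven), and your closing remark that the indifference at $q=0$ is irrelevant because $q=0$ is never jointly optimal is the right way to finish.
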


We can interpret this result to say that stringent $p$-value thresholds
and dissemination noise are \emph{complementary tools} for screening out
$p$-hackers and misguided policies. Think of different passing
thresholds as different $p$-value thresholds, with the threshold
$N=\underline{N}$ as the most stringent $p$-value criterion that
one could impose in this environment. \citet{benjamin2018redefine}'s
article about lowering the ``statistical significance'' $p$-value
threshold for new findings includes the following discussion: 
\begin{quotation}
``The proposal does not address multiple-hypothesis testing, P-hacking,
{[}\ldots{]} Reducing the P value threshold complements --- but does
not substitute for --- solutions to these other problems.'' 
\end{quotation}
Our result formalizes the sense in which reducing $p$-value thresholds
complements dissemination noise in improving social welfare from research.

\subsection{Further Extensions\label{sec:discussionandextensions}}

We explore several relaxations of our simplifying assumptions for the special case studied in this section. First, we consider non-i.i.d. observations, such as those in  time-series data, or data from social networks.  Second, we look at a model in which the maven can face a red herring that is harder to disprove with data than we have assumed so far. Third, we relax the assumption of a continuum of potential covariates. Finally, we relax the assumption that there exists a correct explanation for the outcome variable in the data.

\subsubsection{Non-i.i.d.\ Observations \label{sec:noniidobservations}}

In the model from Section \ref{sec:special}, for each $a\in A$ the raw data contains $N$
i.i.d.\ observations of the $a$ covariate $X^{a}$. This gives a vector
$X^{a}\in\{0,1\}^{N}$ with independent and identically-distributed
components $X_{n}^{a}$. The i.i.d.\ assumption rules out certain applications
where there is natural dependence between different observations of
the same covariate, such as data from social networks, panel data,
or time-series data. We now relax this assumption. For each policy
$a$ there is associated a covariate $X^{a}\in\{0,1\}^{N},$ but the
ex-ante distribution of $X^{a}$ is given by an arbitrary, full-support
$\mu\in\Delta(\{0,1\}^{N})$. (Full support means that
$\mu(x)>0$ for every $x\in\{0,1\}^{N}.$)

The model is otherwise the same as the one from Section \ref{sec:special}. In particular, the true cause and the red herring covariates still
exhibit perfect correlation and perfect negative correlation with
the outcome variable, viewed as random vectors in $\{0,1\}^{N}$.
To be concrete, Nature first generates $a^{*}$ and $a^{\redherr}$ independently and uniformly from $A$. Then, Nature   generates the outcome variable as a vector, $Y\sim\mu$.
Then Nature sets  $X^{a^{*}}=Y$ and $X^{a^{\redherr}}=1-Y$. Finally, for
each $a\in A\backslash\{a^{*},a^{\redherr}\}$, Nature draws the vector $X^{a}\sim\mu$
independently (and independently of $Y$).

When the principal prepares the noisy dataset $\mathcal{D}(q)$, noise is still added to each observation of each covariate independently with probability $q$.  We first show that the hacker's
payoff-maximizing strategy is still to propose a covariate $a$ with
$Y_{n}=\hat{X}_{n}^{a}$ for every observation $n$ in the noisy data.
That is, regardless of how the $N$ observations are correlated, there
is nothing more ``clever'' that a hacker could do to increase the
probability of passing the test than to ``maximally $p$-hack''
and propose a covariate that appears perfectly correlated with the
outcome variable in the noisy dataset.
\begin{lem}
\label{lem:hacker_non_iid}For any $y\in\{0,1\}^{N}$, $\mathbb{P}[X^{a}=y\mid\hat{X}^{a}=x]$
is maximized across all $x\in\{0,1\}^{N}$ at $x=y$, for any $0\le q \le 1/2$ and full-support $\mu.$
\end{lem}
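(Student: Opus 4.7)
The plan is to use Bayes' rule to turn the claim into a coordinate-free inequality in terms of the Hamming distance $d(\cdot,\cdot)$ on $\{0,1\}^N$, and then derive that inequality from the metric structure of $d$. Since each coordinate is flipped independently with probability $q$, the noise kernel satisfies $\mathbb{P}[\hat{X}^a = x \mid X^a = z] = q^{d(x,z)}(1-q)^{N-d(x,z)}$. Applying Bayes with prior $\mu$ and dividing through by $(1-q)^N$, I would show that for fixed $y$ the posterior $\mathbb{P}[X^a = y \mid \hat{X}^a = x]$ is proportional (in $x$) to
$$f(x) \;:=\; \frac{r^{d(x,y)}}{\sum_{z} \mu(z)\, r^{d(x,z)}}, \qquad r \;:=\; q/(1-q).$$
Full support of $\mu$ keeps the denominator positive, and the hypothesis $q \le 1/2$ translates into $r \in [0,1]$, which is the only place the bound on $q$ enters.

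Next I would establish $f(y) \ge f(x)$ for every $x$. After clearing denominators this is equivalent to
$$\sum_z \mu(z)\, r^{d(x,z)} \;\ge\; \sum_z \mu(z)\, r^{d(x,y)+d(y,z)},$$
which I would prove term by term. Because $d$ is a metric on $\{0,1\}^N$, the triangle inequality gives $d(x,z) \le d(x,y) + d(y,z)$; since $r \le 1$, raising $r$ to a smaller exponent yields a larger value, so $r^{d(x,z)} \ge r^{d(x,y)+d(y,z)}$ for every $z$. Multiplying by the nonnegative weight $\mu(z)$ and summing gives the displayed inequality, hence $f(y) \ge f(x)$.

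Finally, I would dispose of the boundary values separately, since they are not covered by the $0 < r < 1$ calculation above: when $q = 0$ the posterior is a point mass at $x$, so $x = y$ is the unique maximizer; when $q = 1/2$ the variables $X^a$ and $\hat{X}^a$ are independent, so the posterior equals $\mu(y)$ regardless of $x$ and $x = y$ is a (non-unique) maximizer. I do not expect any real obstacle here. The only conceptual content is matching the triangle inequality for Hamming distance with the monotonicity of $r \mapsto r^k$ on $[0,1]$; the full-support assumption removes any concern about dividing by zero in Bayes' rule, and the per-coordinate independence of the noise is exactly what turns the exponent in Bayes' rule into a Hamming distance so that the metric argument applies.
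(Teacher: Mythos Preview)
Your proposal is correct and follows essentially the same route as the paper's own proof: apply Bayes' rule, write the noise kernel in terms of Hamming distance, and then use the triangle inequality together with $q/(1-q)\le 1$ to compare the posteriors term by term. Your reparametrization via $r=q/(1-q)$ streamlines the algebra slightly, and the explicit treatment of the boundary cases $q\in\{0,1/2\}$ is a harmless addition.
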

Using the hacker's optimal behavior in Lemma \ref{lem:hacker_non_iid},
we can show that a small amount of dissemination noise will differentially
impact the two types' chances of passing the test, thus it improves
the principal's expected payoff as in the case when the observations
are i.i.d.
\begin{prop}
\label{prop:noniid} For any full support $\mu\in\{0,1\}^{N},$ $V_{\text{maven}}^{'}(0)=0$
while $V_{\text{hacker}}^{'}(0)<0$. In particular, there exists $\bar{q}>0$ so that any noise level
$0 < q \le \bar{q}$ is strictly better than $q=0$.
\end{prop}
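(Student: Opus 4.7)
The plan is to compute $V'_{\text{maven}}(0)$ and $V'_{\text{hacker}}(0)$ directly; smoothness then gives the last claim.

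For the maven, I first note that Lemma \ref{lem:agent_behavior}'s analysis carries over: the posterior likelihood ratio for ``$a_1 = a^*$'' versus ``$a_2 = a^*$'' depends only on the independent Bernoulli$(q)$ noise realizations and not on $\mu$, so the maven optimally picks the element of $\{a^*, a^{\redherr}\}$ whose noisy vector is closer to $Y$ in Hamming distance $d(\cdot,\cdot)$ (the number of coordinates on which two vectors in $\{0,1\}^N$ disagree). Since $X^{a^*} = Y$ and $X^{a^{\redherr}} = 1 - Y$ hold deterministically conditional on $Y$, the joint distribution of $(d(\hat{X}^{a^*}, Y), d(\hat{X}^{a^{\redherr}}, Y))$ is a function of the iid noise flips alone and does not depend on $Y$ or $\mu$. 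Hence $V_{\text{maven}}(q)$ coincides with its value in the iid baseline model, and Lemma \ref{lem:lemmarginalutility} yields $V'_{\text{maven}}(0) = 0$.

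For the hacker, Lemma \ref{lem:hacker_non_iid} says an optimal proposal is any $a$ with $\hat{X}^a = Y$. Such an $a$ exists a.s.\ by the same continuum argument as in the baseline, and is almost surely a generic covariate (as $\{a^*, a^{\redherr}\}$ has measure zero in $A = [0,1]$). For a generic $a$, $X^a \sim \mu$ independently of $Y$, so Bayes' rule yields
\begin{equation*}
\mathbb{P}[X^a = y \mid \hat{X}^a = y,\, Y = y] \;=\; \frac{\mu(y)(1-q)^N}{f(y,q)}, \quad f(y,q) := \sum_{x \in \{0,1\}^N} \mu(x)\, q^{d(x,y)} (1-q)^{N-d(x,y)}.
\end{equation*}
Averaging over $Y \sim \mu$,
\begin{equation*}
V_{\text{hacker}}(q) \;=\; (1-q)^N \sum_{y \in \{0,1\}^N} \frac{\mu(y)^2}{f(y,q)}.
\end{equation*}
Since $f(y, 0) = \mu(y)$, this equals $1$ at $q = 0$. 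Differentiating,
\begin{equation*}
V'_{\text{hacker}}(0) \;=\; -N \;-\; \sum_{y} f'(y, 0).
\end{equation*}
The crucial observation is that $\sum_y f(y,q) \equiv 1$ for all $q$ (since $f(\cdot, q)$ is the marginal distribution of $\hat{X}^a$), so $\sum_y f'(y, 0) = 0$ and therefore $V'_{\text{hacker}}(0) = -N < 0$.

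Finally, $V_{\text{maven}}$ and $V_{\text{hacker}}$ are polynomials in $q$ and hence $C^1$ on $[0, 1/2]$, so the principal's objective $-hV_{\text{hacker}}(q) + (1-h)V_{\text{maven}}(q)$ has derivative $hN > 0$ at $q = 0$ and thus strictly exceeds its $q=0$ value on some interval $(0, \bar q]$. The main obstacle is the hacker derivative: one must cleanly derive the closed form of $V_{\text{hacker}}(q)$ for a general $\mu$ (combining Lemma \ref{lem:hacker_non_iid}, the continuum argument, and a Bayesian posterior calculation) and then recognize the cancellation $\sum_y f'(y,0) = 0$ arising from $f(\cdot, q)$ being a probability distribution for each $q$. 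Without this identity, the derivative would appear to depend on $\mu$ in an intractable way.
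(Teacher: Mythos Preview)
Your argument is correct, with one small slip: $V_{\text{hacker}}(q)=(1-q)^N\sum_y \mu(y)^2/f(y,q)$ is a \emph{rational} function of $q$, not a polynomial, since $f(y,q)$ genuinely depends on $q$ when $\mu$ is not uniform. This does not affect the conclusion, because $f(y,q)>0$ on $[0,1/2]$ (it is the $\mu_q$-probability of $y$, and $\mu_q$ inherits full support from $\mu$), so $V_{\text{hacker}}$ is still $C^\infty$ and the final continuity step goes through.

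Your treatment of the maven coincides with the paper's. Your treatment of the hacker is genuinely different and in fact sharper. The paper fixes the outcome realization $y$ and computes
\[
\left.\frac{\partial}{\partial q}\,\mathbb{P}[X^a=y\mid \hat X^a=y]\right|_{q=0}
= -\,\frac{\mu(\{z:d(z,y)=1\})}{\mu(y)},
\]
then invokes full support to conclude each such term is strictly negative, hence so is the $\mu$-average $V'_{\text{hacker}}(0)$. You instead average over $Y\sim\mu$ first to obtain the closed form for $V_{\text{hacker}}(q)$, differentiate, and exploit the identity $\sum_y f'(y,0)=0$ (immediate from $\sum_y f(y,q)\equiv 1$) to get the exact value $V'_{\text{hacker}}(0)=-N$. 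This recovers the baseline constant from Lemma~\ref{lem:lemmarginalutility} for \emph{every} full-support $\mu$, not just the uniform one; the paper's per-$y$ computation also sums to $-N$ (each $z$ is a Hamming-$1$ neighbor of exactly $N$ vectors $y$), but the paper does not carry out that summation and stops at strict negativity. Your route is slightly more global and yields more information; the paper's route is more local and makes the role of full support more transparent at the level of each $y$.
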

When the raw dataset consists of correlated observations --- for
example, data on $N$ individuals who influence each other in a social
network or $N$ periods of time series data for a very large number of economic indicators ---
it may be unreasonable for the principal to only release some of the
observations (e.g., only the time series data for even-number years)
and keep the rest of the raw dataset as a secret holdout set to test
the agent's proposal and identify the $p$-hackers. Our procedure
of releasing all of the observations infused with i.i.d.\ dissemination
noise (which resembles the current implementation of privacy-preserving noise) may be more reasonable in such contexts.  Proposition \ref{prop:noniid}
shows our main insight continues to be valid. Even when the observations
have arbitrary correlation, which the hackers may take advantage of
in their data mining, a small amount of dissemination noise still
strictly improves the principal's expected payoff.

\subsubsection{More Misleading Red Herrings \label{sec:generalredherring}}

In the model from Section \ref{sec:special}, we assumed that
the ``red herring'' covariate is  perfectly negatively
correlated with $Y$. This corresponds to an extreme kind of complementarity
between theory and data in learning the true cause, as even a small
amount of data can disprove the theoretically plausible alternative
and identify the truth.

We now consider a situation where the red herring is more misleading, 
and not always easily ruled out by the data. We allow the red herring
to be just like any other covariate in $A$, so that it is simply
uncorrelated with the outcome instead of perfectly negatively correlated with it. So, the only modification relative to the  model from  Section \ref{sec:special} is that $X^{a^{\redherr}}$,  like $X^{a}$ for $a \notin \{a^{*},  a^{\redherr}\}$,  is also independent of $Y$.  

It is
easy to see that the change in how we model the red herring covariate
does not affect the optimal behavior of either the hacker or the maven.
A hacker proposes some covariate $a$ that perfectly correlates with
$Y$ in the noisy dataset. A maven chooses between $a^{*}$ and $a^{\redherr}$
according to how they correlate with $Y$ in the noisy data, randomizing
if there is a tie. When the red herring covariate is independent of
the outcome in the raw dataset, the maven falls for the red herring
with a higher probability for every level of noise. Also, unlike in
the baseline model where the maven gets rejected by the policymaker
if he happens to propose the red herring, here the maven may propose
a misguided policy that passes the test if all $N$ realizations of
$X^{a^{\redherr}}$ perfectly match that of the outcome $Y$ in the
raw dataset.

Our next result implies that a strictly positive amount of dissemination
noise still improves the principal's expected payoffs given ``reasonable''
parameter values.
\begin{prop}
\label{prop:generalredherring} The derivative of the principal's
expected payoff, as a function of the noise level $q$, is $hN-(1-h)N(N+1)2^{-(N+1)}$
when evaluated at $q=0$. This derivative is strictly positive when
$h>\frac{N+1}{2^{N+1}+N+1}$.  In particular, when this condition on $h$ is satisfied, there exists $\bar{q}>0$ so that 
any noise level $0 < q \le \bar{q}$ is strictly better than $q=0$.
\end{prop}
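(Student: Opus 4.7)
The plan is to compute $U'(0)$, where $U(q) = -h V_{\mathrm{hacker}}(q) + (1-h) U_{\mathrm{maven}}(q)$ is the principal's expected payoff, $V_{\mathrm{hacker}}$ is as in Lemma~\ref{lem:lemmarginalutility}, and $U_{\mathrm{maven}}(q)$ is the maven's contribution to the principal: $+1$ when he proposes $a^{*}$, and $-1$ when he proposes $a^{\redherr}$ and this passes the raw-data test (which is now possible, since $X^{a^{\redherr}}$ is only independent of $Y$ rather than perfectly anti-correlated).

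For the hacker, nothing changes: the analysis of $V_{\mathrm{hacker}}$ in Lemma~\ref{lem:lemmarginalutility} only used the distribution of an irrelevant covariate, so $V_{\mathrm{hacker}}(q) = (1-q)^N$ and $V_{\mathrm{hacker}}^{\prime}(0) = -N$. For the maven, I would condition on $k := |\{n : X_{n}^{a^{\redherr}} = Y_{n}\}| \sim \mathrm{Bin}(N, 1/2)$, and let $p(k,q) := \Pr(\text{maven proposes } a^{*} \mid k, q)$. For $k < N$ the maven contributes $p(k,q)$ (since $a^{\redherr}$ fails on raw data). For $k = N$, we have $X^{a^{*}} = X^{a^{\redherr}} = Y$, so $\hat X^{a^{*}}$ and $\hat X^{a^{\redherr}}$ are conditionally iid and $p(N,q) = 1/2$; both proposals pass, and the contribution is $\tfrac12(+1) + \tfrac12(-1) = 0$. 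Therefore $U_{\mathrm{maven}}(q) = \sum_{k=0}^{N-1} \binom{N}{k} 2^{-N} p(k,q)$, and $p(k,0) = 1$ for $k < N$.

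The main step is to expand $p^{\prime}(k,0)$. Write $M^{*} = N - F^{*}$ and $M^{\redherr} = k + F^{r}_{-} - F^{r}_{+}$, where $F^{*} \sim \mathrm{Bin}(N,q)$, $F^{r}_{-} \sim \mathrm{Bin}(N-k,q)$ counts flips at the $N-k$ positions where $X^{a^{\redherr}} \neq Y$ (each flip creates a match), and $F^{r}_{+} \sim \mathrm{Bin}(k,q)$ counts flips at the $k$ matching positions, all independent. The event $\{M^{\redherr} \geq M^{*}\}$ is $\{F^{r}_{-} + F^{*} - F^{r}_{+} \geq N - k\}$. For $k \leq N-2$ this requires at least two flips and has probability $O(q^{2})$, so $p^{\prime}(k,0) = 0$. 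For $k = N-1$, $F^{r}_{-} \sim \mathrm{Bin}(1,q)$, and the only events with probability $\Theta(q)$ are $(F^{r}_{-}, F^{*}, F^{r}_{+}) = (1,0,0)$ contributing $q + O(q^{2})$ and $(0,1,0)$ contributing $Nq + O(q^{2})$; both yield a tie $M^{*} = M^{\redherr}$, while $\{M^{\redherr} > M^{*}\}$ is $O(q^{2})$. Uniform tie-breaking gives $1 - p(N-1,q) = \tfrac{N+1}{2}\, q + O(q^{2})$, hence $p^{\prime}(N-1,0) = -\tfrac{N+1}{2}$.

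Assembling, $U_{\mathrm{maven}}^{\prime}(0) = \binom{N}{N-1} 2^{-N} \cdot \bigl(-\tfrac{N+1}{2}\bigr) = -\tfrac{N(N+1)}{2^{N+1}}$, so $U^{\prime}(0) = hN - (1-h)\tfrac{N(N+1)}{2^{N+1}}$ as claimed. Strict positivity reduces to $h\, 2^{N+1} > (1-h)(N+1)$, i.e.\ $h > (N+1)/(2^{N+1} + N + 1)$. Since $U$ is a polynomial in $q$, $U^{\prime}(0) > 0$ yields some $\bar q > 0$ on which $U$ is strictly increasing, so every $q \in (0,\bar q]$ is strictly better than $q = 0$. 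The delicate step is the first-order enumeration at $k = N-1$: I must verify that no $\Theta(q)$ event is overlooked and that the tie-break factor $1/2$ is applied correctly, since this single case supplies the entire derivative contribution from the maven.
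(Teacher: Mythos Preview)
Your argument is correct and leads to the same formula $U'(0)=hN-(1-h)N(N+1)2^{-(N+1)}$ as the paper, but the route is genuinely different. The paper first establishes an explicit expression for the principal's payoff at every $q$ (their Lemma~\ref{lem:generalredherringformulapayoffs}), writing the maven's contribution in terms of three independent binomials $A\sim\mathrm{Bin}(N,1-q)$, $B\sim\mathrm{Bin}(N,1/2)$, $C\sim\mathrm{Bin}(N,1-q)$, and then differentiates that formula at $q=0$ via the product rule, collecting the $k=N$ and $k=N-1$ terms of $\mathbb P(A=k)$ together with the relevant derivatives of the $B$ and $C$ probabilities. You instead condition on the raw-data match count $k$ of the red herring, observe that the $k=N$ case contributes identically zero by symmetry, and that for $k\le N-2$ the maven can only be misled by events of order $q^{2}$; this isolates $k=N-1$ as the sole first-order contributor, and the single-flip enumeration there gives $p'(N-1,0)=-(N+1)/2$ directly.

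The trade-off is clear: the paper's derivation yields a closed-form payoff valid for all $q$, which is useful if one later wants to optimize over $q$ or study higher-order behavior. Your perturbation argument is shorter and conceptually sharper for the specific task of computing $U'(0)$, since it makes transparent \emph{why} the maven's first-order loss is concentrated entirely at the borderline case $k=N-1$: that is the only configuration in which a single noise flip can create a tie. Your check that both first-order events $(F^{r}_{-},F^{*},F^{r}_{+})=(1,0,0)$ and $(0,1,0)$ produce ties (hence each counts with weight $1/2$) is correct and is indeed the delicate point.
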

When the red herring covariate is perfectly negatively correlated
with the outcome variable, we found that the optimal level of noise
is always strictly positive. Proposition \ref{prop:generalredherring}
says this result remains true even when the red herring can be more
misleading, provided there are enough hackers relative to the number
of observations in the data. The lowest amount of hackers required
for dissemination noise to be useful converges to 0 at an exponential
rate as $N$ grows. For example, even when there are only $N=10$
observations, the result holds whenever more than 0.53\% of all researchers
are $p$-hackers.

\subsubsection{Finite Number of Covariates \label{sec:kvariables}}

In the model from Section \ref{sec:special}, we imagine there is a continuum of covariates
$a\in A=[0,1]$. This represents an environment with a very ``wide''
dataset, where there are many more candidate explanatory variables and econometric specifications than
observations. But the main idea behind our result remains true if there
is a finite but large number of covariates.

Suppose $A=\{1,2,\ldots,K\}$, so there are $2\le K<\infty$ covariates.
As in the baseline model, a true cause and a red herring are drawn
from the set of all covariates, with all pairs $(a^{*},a^{\redherr})\in A^{2},$
$a^{*}\ne a^{\redherr}$ equally likely. The outcome $Y$ is perfectly
positively correlated with the true cause, so $Y=X^{a^{*}}$. The
other covariates (including the red herring) are independent of the outcome, as in the extension in Section \ref{sec:generalredherring}. Once $(a^{*},a^{\redherr})$ are drawn, we have fixed a joint distribution
among the $K+1$ random variables $(Y,X^{1},...,X^{K})$. The raw
dataset consists of $N$ independent observations drawn from this
joint distribution. The principal releases a noisy version of the
dataset with noise level $q\in[0,1/2]$ as before. 

As the number of covariates $K$ grows, there is more scope for $p$-hacking
to generate misguided policies. This happens for two reasons. First,
holding fixed the policymaker's test and the number of observations,
it is easier for the $p$-hacker to find a covariate that passes the
test when there are more covariates to data mine. Second, the probability
that the $p$-hacker proposes an incorrect covariate also increases with
$K$. When the number of covariates is finite, a $p$-hacker has a
positive probability of stumbling upon the true cause by chance, but
this probability converges to 0 as $K$ goes to infinity. As the statistical
environment becomes more complex and the number of potential models
explodes ($K\to\infty$), not only is the $p$-hacker more likely
to pass the test, but his proposal also leads to a misguided policy
with a higher probability conditional on passing.

In fact, the social harm of a $p$-hacker converges to that of the
baseline model with a continuum of covariates as $K\to\infty.$ As
a result, we can show that a small amount of dissemination noise improves
the principal's payoffs relative to no noise when $K$ is finite but
large, provided the fraction of hackers is not too close to 0.
\begin{prop}
\label{prop:finite_K}Let the number of observations $N$ and the
fraction of hackers $0<h<1$ be fixed, and suppose $h>\frac{N+1}{2^{N+1}+N+1}$.
There exists a noise level $q'>0$ and an integer $\underline{K}$
so that when there are $K$ covariates with $K\ge\underline{K},$
the principal does strictly better with noise level $q'$ than noise
level 0.
\end{prop}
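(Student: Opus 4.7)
The plan is to reduce Proposition~\ref{prop:finite_K} to Proposition~\ref{prop:generalredherring} by a limiting argument. Write $U_K(q)$ for the principal's expected payoff with $K$ covariates at noise level $q$, $U_\infty(q)$ for its continuum counterpart from Section~\ref{sec:generalredherring}, and split each into hacker and maven components: $U_K(q) = h\, U_K^h(q) + (1-h)\, U_K^m(q)$. I will show $U_K(q) \to U_\infty(q)$ pointwise as $K \to \infty$ and then transfer the strict improvement guaranteed by Proposition~\ref{prop:generalredherring} to the finite-$K$ model.

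First I would observe that the maven contribution is literally invariant in $K$: the maven restricts attention to $\{a^*, a^r\}$, and her behavior depends only on the joint distribution of $(Y, \hat{X}^{a^*}, \hat{X}^{a^r})$, which is the same in both models. So $U_K^m(q) = U_\infty^m(q)$, and the work lies entirely in $U_K^h(q)$. Next I would argue that, as in Lemma~\ref{lem:agent_behavior}, the hacker optimally picks uniformly among covariates $a$ with $\hat{X}^a = Y$. Define $A^* = \mathbf{1}\{\hat{X}^{a^*} = Y\}$, $S_0 = |\{a \neq a^*: \hat{X}^a = Y\}|$, and $T = A^* + S_0$. Then $A^*$ is Bernoulli with parameter $(1-q)^N$, independently of $S_0 \sim \text{Bin}(K-1, 2^{-N})$, since $\mathbb{P}[\hat{X}^a = Y] = 2^{-N}$ for every non-$a^*$ covariate regardless of $q$. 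For any non-$a^*$ noisy match, $\mathbb{P}[X^a = Y \mid \hat{X}^a = Y] = (1-q)^N$. Using that the principal's payoff is $+1$ when $a^*$ is proposed and passes and $-1$ when a non-$a^*$ proposal passes,
\[
U_K^h(q) = \mathbb{E}\!\left[\tfrac{A^*}{T}\mathbf{1}_{\{T \ge 1\}}\right] - (1-q)^N\, \mathbb{E}\!\left[\tfrac{S_0}{T}\mathbf{1}_{\{T \ge 1\}}\right] + o(1),
\]
where the $o(1)$ term absorbs the contribution of $\{T = 0\}$, whose probability vanishes as $K\to\infty$.

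Then I would take $K \to \infty$: since $S_0 \to \infty$ in probability, $A^*/T \le 1/(1 + S_0) \to 0$, so $\mathbb{E}[A^*/T] \to 0$ by dominated convergence, and $S_0/T \to 1$ in probability, so $\mathbb{E}[S_0/T] \to 1$. Consequently $U_K^h(q) \to -(1-q)^N = U_\infty^h(q)$ pointwise for each $q \in [0, 1/2]$, and hence $U_K(q) \to U_\infty(q)$ pointwise. To finish, invoke Proposition~\ref{prop:generalredherring}: under the hypothesis $h > (N+1)/(2^{N+1} + N + 1)$, there exists $q' > 0$ with $\delta := U_\infty(q') - U_\infty(0) > 0$. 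Choose $\underline{K}$ large enough that $|U_K(q) - U_\infty(q)| < \delta/3$ at both $q = 0$ and $q = q'$ whenever $K \ge \underline{K}$. Then $U_K(q') - U_K(0) > \delta/3 > 0$ for all such $K$, which is the desired conclusion.

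The main obstacle is the pointwise convergence $U_K^h(q) \to U_\infty^h(q)$. The delicate ingredients there are (i) identifying the exact joint distribution of $(A^*, S_0)$ — in particular the fact that a non-$a^*$ covariate matches $Y$ in the noisy data with probability $2^{-N}$ regardless of $q$, so $S_0$'s law depends on $K$ but not on $q$ — and (ii) controlling the negligible $\{T = 0\}$ contribution, which requires a separate but trivial bound on its probability. Once these are in hand, the convergence of $\mathbb{E}[A^*/T]$ to $0$ and $\mathbb{E}[S_0/T]$ to $1$ is a routine dominated-convergence calculation.
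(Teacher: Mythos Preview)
Your proof is correct and follows the same high-level strategy as the paper: reduce to the continuum setting of Section~\ref{sec:generalredherring}, invoke Proposition~\ref{prop:generalredherring} to obtain a good noise level $q'$, and then transfer the strict improvement back to large finite $K$. The maven step is identical in both.

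The implementation of the hacker step differs. You establish pointwise convergence $U_K^h(q)\to U_\infty^h(q)$ at \emph{both} $q=0$ and $q=q'$ via the probabilistic decomposition $U_K^h(q)=\E[A^*/T]-(1-q)^N\E[S_0/T]+o(1)$ and dominated convergence. The paper instead proves the one-sided inequality $U_K(q')>U_\infty(q')$ for \emph{every} finite $K$---observing that with finitely many covariates the hacker sometimes proposes $a^*$ by luck and sometimes cannot find a perfect match, both of which help the principal---and then only needs convergence at $q=0$, which it obtains from an explicit closed-form for $\E[1/(1+B)]$ with $B$ binomial. The paper's dominance argument is slightly sharper (it holds for all $K$, not just large $K$) and avoids your work at $q=q'$; your argument is more symmetric and avoids the closed-form computation. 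Either route closes the proof.
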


The lower bound on the fraction of hackers in this result is mild and matches the condition from Proposition \ref{prop:generalredherring}. If there are 10 observations and more than 0.53\% of researchers are uninformed, then the principal can improve her expected payoff with a non-zero amount of noise whenever the (finite) dataset contains enough covariates. 

\subsubsection{No True Cause}\label{sec:notruecause}

We turn to a version of our environment where all models can be wrong. Suppose that, with some probability, none of the covariates in the dataset
is the causal mechanism behind the outcome. As in the baseline model,
 Nature draws $a^{*}$ and $a^{\redherr}$ uniformly at random
from $[0,1]$. With probability $0<\beta\le1,$ the covariate $a^{*}$
is the true cause and $X^{a^{*}}$ is perfectly correlated with $Y$.
But with the complementary probability, $a^{*}$ is another red herring
and $X^{a^{*}}$ is perfectly negatively correlated with $Y$ (just
as $X^{a^{\redherr}}$ is). The maven observes $a^{\redherr}$ and $a^{*}$ ---
the maven does not know which is which, and does not know whether
$a^{*}$ is the true cause or another red herring.

The agent can either propose a covariate $a\in A$, or report $\varnothing$
indicating that none of the covariates is the true cause. If the agent
proposes a covariate, the policymaker implements it if and only if it passes the policymakers' test 
(that is, if it is perfectly correlated with $Y$ in the original dataset). The
principal gets 1 if the true cause is implemented, 0 if the proposal
is rejected, and -1 if any other covariate is implemented: so when
the data does not contain the true cause, the principal gets -1 no matter which policy gets implemented.  If the agent reports $\varnothing$, then no policy is implemented and the principal gets 0.

The type $\theta$ agent gets $0<w_{\theta}<1$ from being right (either proposing the true
cause when there is one, or reporting $\varnothing$ when no true
cause exists in the dataset), and gets $1-w_{\theta}$ when the reported covariate
is implemented. (Note that agents would never abstain from proposing a covariate
even if they had this option in the baseline model or in the previous extensions,
since they always get zero utility from abstaining but expect to get strictly positive utility from proposing a random covariate.)

\begin{prop}
\label{prop:no_true_cause}Suppose $w_{\text{maven}}>3/4$ and $\beta>w_{\text{maven}}$. Then there exists some $\bar{q}>0$
so that the principal strictly prefers any $q$ level of noise with
$0<q\le\bar{q}$ to 0 noise.
\end{prop}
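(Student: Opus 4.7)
The plan is to show that the right-derivative of the principal's expected payoff at $q = 0$ is strictly positive, driven entirely by the hacker channel while the maven channel is affected only at second order.

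First, I would pin down behavior at $q = 0$. Given the noiseless data, the maven can perfectly distinguish Case 1 ($a^*$ is the true cause, probability $\beta$) from Case 2 ($a^*$ is another red herring, probability $1-\beta$): in Case 1 one of the two observed candidates has all $N$ matches with $Y$ and the other has $0$ matches; in Case 2 both have $0$ matches. Using $w > 1/2$, I would check that the maven's optimal reaction is to propose the matching candidate in Case 1 and report $\varnothing$ in Case 2 (in particular, $\varnothing$ dominates the option of proposing some non-candidate perfectly-matching covariate because $w > 1-w$). The hacker proposes some perfectly matching covariate provided $(1-w) > w(1-\beta)$, equivalently $\beta > (2w-1)/w$; this follows from $\beta > w$ together with $(w-1)^2 > 0$. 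The resulting payoff is $U(0) = -h + (1-h)\beta$.

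Next, I would analyze small $q > 0$. For the hacker, by continuity the comparison $(1-w)(1-q)^N > w(1-\beta)$ still holds in a neighborhood of $0$, and since $a^*$ is uniformly distributed on $[0,1]$ the hacker's proposal is almost surely not the true cause. Hence $U_{\text{hacker}}(q) = -(1-q)^N$ in this neighborhood, giving $U_{\text{hacker}}'(0) = N > 0$. For the maven, I would bound $U_{\text{maven}}(q)$ from above by $\beta$ (the first-best achievable) and from below by analyzing the maven's Bayesian decision rule. Write $L_1, L_2, L_3$ for the unnormalized posterior weights on $(\theta=1,\sigma=A)$, $(\theta=1,\sigma=B)$, $(\theta=2)$. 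Checking the comparison $L_1$ vs.\ $L_2$ vs.\ $w L_3$ at the leading-order Case 1 outcomes $(m_A,m_B) \in \{(N,0),(N,1),(N-1,0)\}$ shows the maven proposes the true cause at each of these. The key algebraic input is that $\beta/(2w(1-\beta)) > 1$ (equivalent to $\beta > 2w/(1+2w)$, which follows from $\beta > w > 3/4$), so the relevant likelihood ratio exceeds $w$ even at the marginal outcome $(N-1,0)$ when $N = 2$. Since these three outcomes have total Case 1 probability $1 - O(q^2)$, and since in Case 2 every maven action yields principal payoff $0$ regardless (red-herring proposals fail the test; one must just verify via $w > 1/2$ that the maven does not defect to a non-candidate perfectly-matching covariate), we obtain $U_{\text{maven}}(q) \ge \beta - Cq^2$ for some constant $C$.

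Squeezing $U_{\text{maven}}$ between $\beta - Cq^2$ and $\beta$ yields $U_{\text{maven}}'(0^+) = 0$. Therefore
\[
\frac{d}{dq}\bigl[-hV_{\text{hacker-impl}}(q) + (1-h)U_{\text{maven}}(q)\bigr]\Big|_{q=0^+} = hN + (1-h)\cdot 0 = hN > 0,
\]
so for all sufficiently small $q > 0$ the principal strictly prefers noise level $q$ to $q = 0$, giving the desired $\bar{q}$.

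The main obstacle I anticipate is verifying in Step~2 that the maven's Bayes-optimal decision really does coincide with ``propose the more-matched candidate'' at all Case 1 outcomes with probability mass of order $1$ or $q$; the edge case is $N = 2$ at $(m_A,m_B) = (N-1,0) = (1,0)$, where $L_1$ and $L_3$ are both of order $q$ and one must exploit the joint condition $w > 3/4, \beta > w$ to ensure $L_1 > w L_3$. A secondary subtlety is ruling out non-candidate deviations: both the maven (in Case 2) and the hacker face a choice between $\varnothing$ and proposing a noisy-matching non-candidate, and both comparisons go the right way thanks to $w > 1/2$ and to the continuity of $(1-w)(1-q)^N$.
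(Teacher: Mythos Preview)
Your approach is essentially the paper's: show the hacker side contributes $hN$ to the right-derivative at $q=0$, and the maven side contributes $0$ via a lower bound that touches the true payoff at $q=0$ and has zero slope there. The paper constructs this lower bound exactly as you do, by checking that on the three ``at most one flip'' outcomes the maven still proposes the true cause.

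There is one genuine soft spot. You claim that $w>1/2$ suffices to rule out the maven defecting to a non-candidate perfectly-matching covariate. It does not. Take $N=2$ and noisy data $(m_1,m_2)=(1,0)$, an event of order $q$ in both Case~1 and Case~2. For small $q$ the maven's posterior concentrates on ``$a_1$ is the true cause'' with weight $\approx \beta/(2-\beta)$ and ``no true cause'' with weight $\approx 2(1-\beta)/(2-\beta)$. The payoffs from reporting $a_1$, reporting $\varnothing$, and reporting a non-candidate are approximately $\beta/(2-\beta)$, $2w(1-\beta)/(2-\beta)$, and $1-w$. If $w$ is only slightly above $1/2$ and $\beta$ only slightly above $w$ (e.g.\ $w=0.51$, $\beta=0.52$), both of the first two lose to $1-w$, so the maven \emph{would} defect to a non-candidate. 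That defection passes the test with probability $(1-q)^N$, giving an $O(q)$ negative contribution in Case~2 and destroying your squeeze $U_{\text{maven}}(q)\ge\beta-Cq^2$.

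The fix is exactly the hypothesis $w>3/4$, and the paper's device is cleaner than a case-by-case check: at any data realization, one of the three ``candidate'' reports $\{a_1,a_2,\varnothing\}$ has posterior at least $1/3$, so the best of them yields expected payoff at least $w/3>1-w$, which dominates any non-candidate report uniformly. You should invoke this argument (with $w>3/4$, not $w>1/2$) both for your Case~2 step and at the Case~1 outcome $(N-1,0)$, where you verify $L_1>wL_3$ but never compare against the non-candidate option.
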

This result says that even when there is some probability that none
of the covariates is the true cause, provided this probability is
not too high and agents put enough weight on being right, a small
enough amount of dissemination noise is still strictly better than
no noise. A stronger assumption on $w_{\text{maven}}$ is needed for this result compared to the previous results. This
 ensures that when the maven is sufficiently confident that neither $a^*$ nor $a^{\redherr}$ is the true cause,
he would rather report $\varnothing$ (and get the utility for being right) than report some wrong covariate
that passes the policymaker's test.

\section{\label{sec:example}A Numerical Example of Dissemination Noise and $p$-Hacking in Linear Regressions}

While our models from Sections \ref{sec:model} and \ref{sec:special} are highly stylized, we show through a numerical example that dissemination noise may play a similar role in screening the researcher's
expertise in more realistic empirical settings that do not satisfy all of our model's simplifying assumptions. 

We consider a linear regression setting with a continuous outcome variable and some continuous covariates, where the outcome  is the sum of three covariates plus noise. The three causal covariates are randomly selected from a set of potential explanatory variables, and the principal
would like to implement a policy that correctly targets the causal covariates. An uninformed agent
may $p$-hack by trying out all regressions involving different triplets of explanatory variables to game the policymaker's test, which simply evaluates the
agent's econometric specification based on its explanatory power on the raw dataset. 

To be concrete, there are 20 covariates $X^{1},\dots,X^{20}$, with each $X^{i}\sim\mathcal{N}(0,1)$,
where $\mathcal{N}(\mu,\sigma^{2})$ is the normal distribution with
mean $\mu$ and variance $\sigma^{2}.$ It is known that the outcome $Y$ is generated
from a linear model $Y=X^{i_{1}^{*}}+X^{i_{2}^{*}}+X^{i_{3}^{*}}+\epsilon$
where $1\le i_{1}^{*}<i_{2}^{*}<i_{3}^{*}\le20$ are three of the
covariates, with all triplets equally likely, and $\epsilon\sim\mathcal{N}(0,4)$ is an error term.
Without loss, we analyze the case when the causal covariates have the realization ($i_{1}^{*},i_{2}^{*},i_{3}^{*})=(1,2,3)$.
The principal's raw dataset
consists of 20 independent observations of the outcome variable and
the covariates from their joint distribution.

The principal disseminates
a noisy version of the data to the agent by adding an independent
noise term with the distribution $\mathcal{N}(0,\sigma_{noise}^{2})$
to every realization of each covariate in the dataset. The noise variance $\sigma_{noise}^{2}$ controls how much dissemination noise the principal injects into the released data.

The agent analyzes the data and proposes a model
$(\hat{\imath}_{1},\hat{\imath}_{2},\hat{\imath}_{3})$ for $Y$.
Then a policymaker tests  the proposed model on the raw data and implements a policy targeting
the covariates $(\hat{\imath}_{1},\hat{\imath}_{2},\hat{\imath}_{3})$
if the model passes the test. Suppose
$(\hat{\imath}_{1},\hat{\imath}_{2},\hat{\imath}_{3})$ passes the
test when the linear regression's $R^{2}$ on the raw data exceeds a critical value,
otherwise the proposal is rejected. The critical
value is the 95-percentile $R^{2}$ when a triplet of covariates is
chosen uniformly at random from all possible triplets. The principal gets utility 1 if the correct specification $(1, 2, 3)$ passes the test, utility -1 if any other specification passes the test, and utility 0 if the proposal is rejected.

With some probability, the agent is a hacker who is uninformed about $(i_{1}^{*},i_{2}^{*},i_{3}^{*})$ and runs all
${20 \choose 3}=1140$ linear regressions of the form $Y=X^{i_{1}}+X^{i_{2}}+X^{i_{3}}+\epsilon$
for different choices of the three covariates $i_{1},i_{2},i_{3}$ in the noisy data. The agent then proposes the model with the highest $R^2$ value. 
With complementary probability, the agent is a maven whose expertise
in the subject lets him narrow down the causal model of $Y$ to either the true $Y=X^{1}+X^{2}+X^{3}+\epsilon$, or
the incorrect model $Y=X^{4}+X^{5}+X^{6}+\epsilon$. The maven runs
two regressions using the noisy data, and proposes either $(1, 2, 3)$ or
$(4, 5, 6)$ to the policymaker, depending on which regression has a
higher $R^{2}.$ (Unlike in the model where we derive optimal behavior for the agents, for this example their behavior are exogenously given.)

We draw a few comparisons between the example and our baseline model from Section \ref{sec:model}. Like our model, the example captures a setting
with a wide dataset, in the sense that there are many more potential specifications (more than 1000) than there are observations in the data (20). The true cause $a^*$ corresponds to the triplet $(1, 2, 3)$, and the red herring $a^{\redherr}$ corresponds to the triplet $(4,5,6)$. Unlike in the baseline model, this example does not feature a continuum of potential specifications or independence between different specifications (since two triplets may share some explanatory variables). Nevertheless, we numerically show that injecting dissemination noise in the form of choosing a strictly positive $\sigma_{noise}^{2}$ has some of the same properties and trade-offs for the principal as in our simple baseline model.

\begin{figure}
\begin{centering}
\includegraphics[scale=0.5]{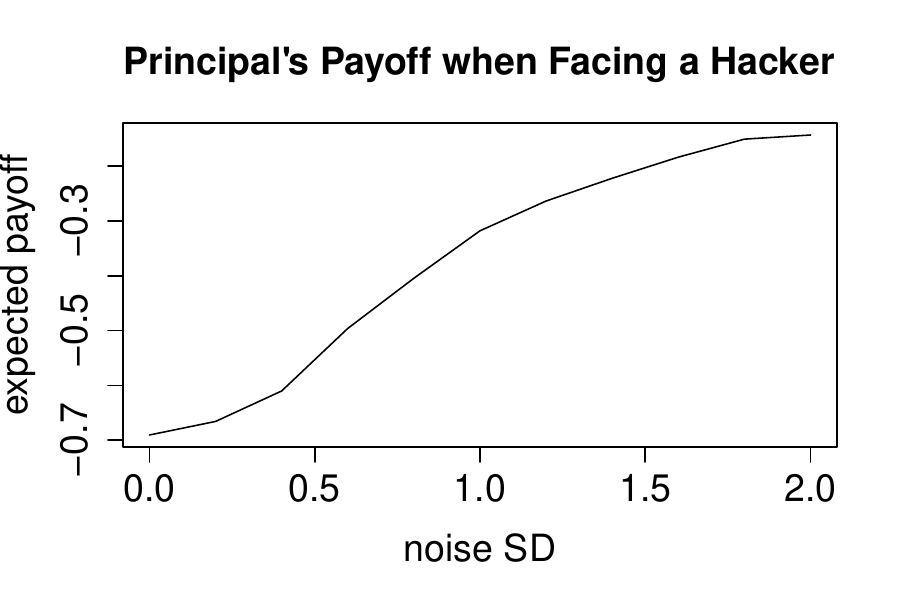} \includegraphics[scale=0.5]{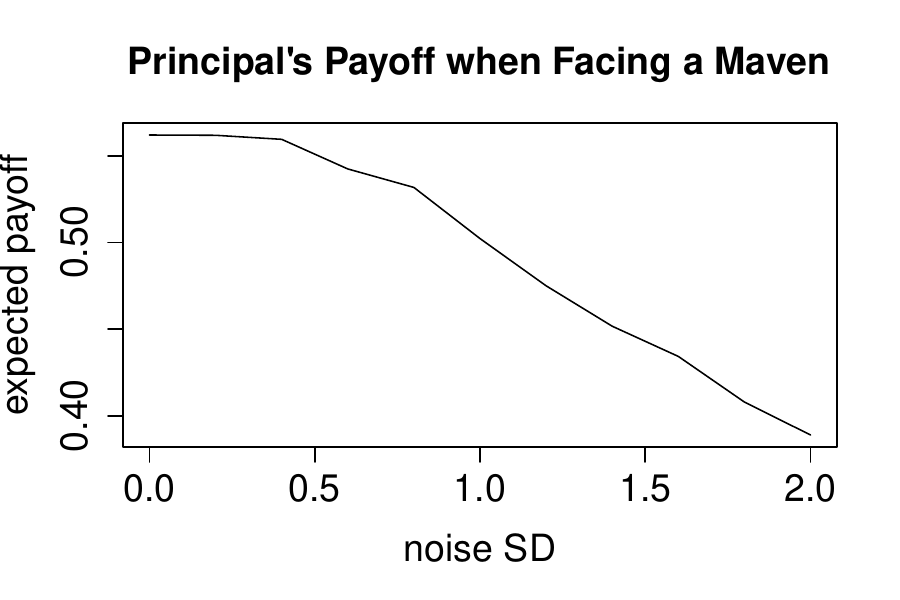} 
\par\end{centering}
\caption{\label{fig:conditional_welfare}The principal's expected utility conditional on the
agent being a hacker or a maven, as a function of the amount of dissemination
noise.}
\end{figure}

Figure \ref{fig:conditional_welfare} depicts the expected utility of the principal
conditional on the agent's type, as a function of the amount of dissemination
noise that the principal adds to the covariates before releasing the
dataset. The expected social harm from a hacker agent is mitigated
when there is more noise. The idea is that when a hacker analyzes
a noisy dataset, the model $(i_{1},i_{2},i_{3})$ with the highest
regression $R^{2}$ in the noisy data is often a \emph{bait} with
poor $R^{2}$ performance in the true dataset. The covariates $i_{1},i_{2},i_{3}$
look correlated with the outcome $Y$ only because they were hit with
just the right noise realizations, but a hacker who falls for these
baits and proposes the model $(i_{1},i_{2},i_{3})$ will get screened
out by the policymaker's test, which is conducted in the raw
data.

Of course, a maven  is also hurt by the noise. The principal's expected payoff when facing a maven falls when  more
dissemination noise is added to data. The maven needs to use the data to compare the
two  candidates (1, 2, 3) and (4, 5, 6). Noisier data makes it harder to identify the true causal model.

\begin{figure}
\begin{centering}
\includegraphics[scale=0.5]{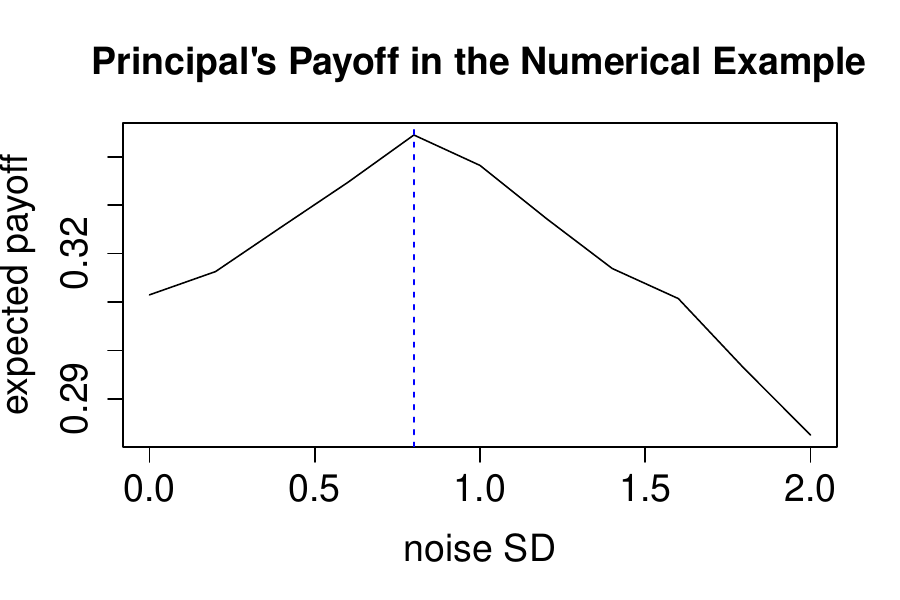} 
\par\end{centering}
\caption{\label{fig:average_payoff}Expected utility of the principal  as a function of
the standard deviation of dissemination noise, when 20\% of the agents are hackers
and 80\% are mavens.}
\end{figure}

Suppose 20\% of the agents are hackers and 80\% are mavens. Figure
\ref{fig:average_payoff} shows the expected social welfare as a function
of the amount of dissemination noise. Consistent with the message of Proposition \ref{prop:mainstatic}, the optimal dissemination noise
trades off screening out hackers, using the baits created by noise,
versus preserving data quality for mavens to identify the correct
model.

The optimal amount of dissemination noise is strictly positive because a small
amount of noise hurts hackers more than mavens. The intuition in this example, as in the model, is that
it is likely that noise creates some baits in the disseminated dataset,
but it is unlikely that the specification  $(4, 5, 6)$ happens to contain one
of the baits. The maven, who only considers the two candidate specification
$(1, 2, 3)$ and $(4, 5, 6)$, is much less likely to fall for a bait than
the hacker, who exhaustively search through all possible specification.

\begin{figure}
\begin{centering}
\includegraphics[scale=0.35]{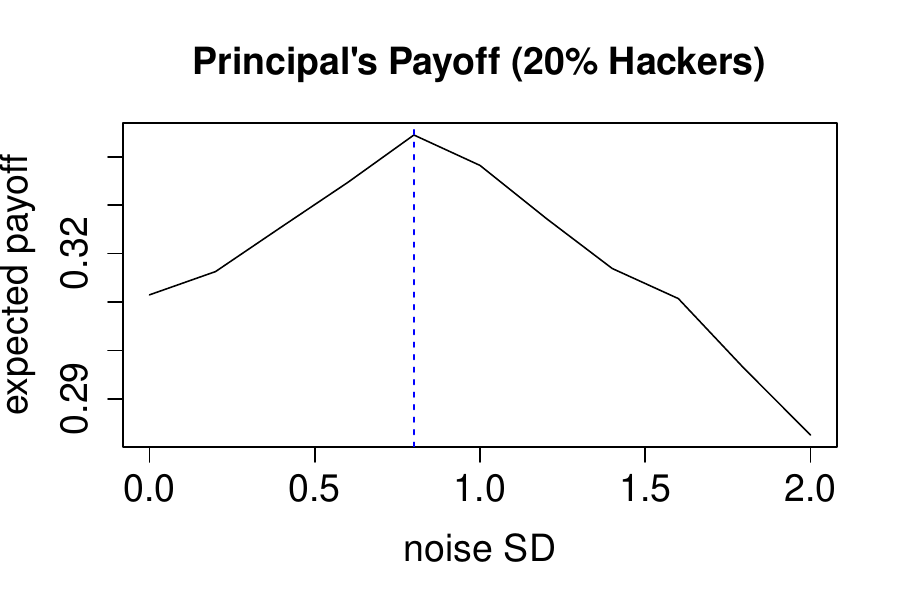} \includegraphics[scale=0.35]{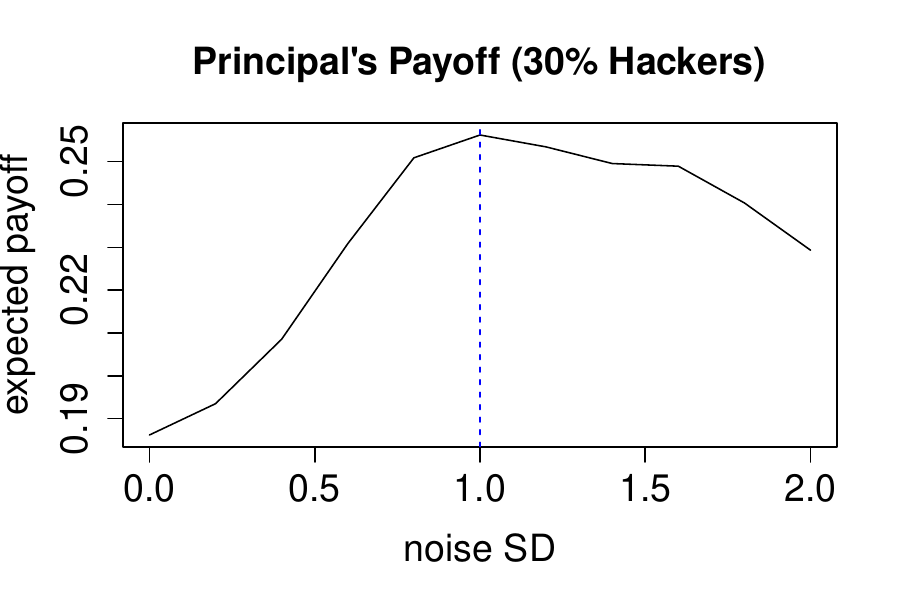}
\includegraphics[scale=0.35]{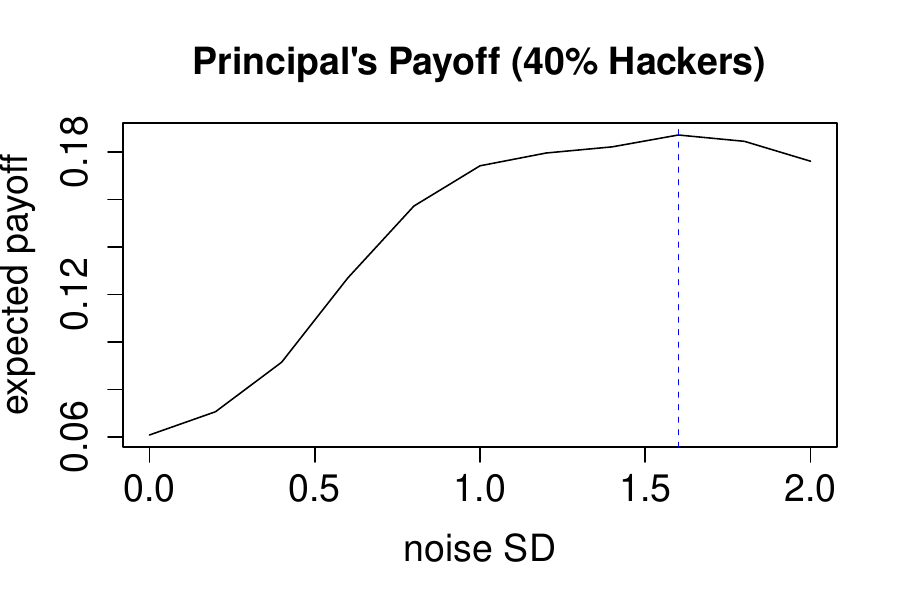}
\end{centering}
\caption{\label{fig:comparative_statics}Comparative statics of the optimal
amount of dissemination noise with respect to the fraction of hackers.}
\end{figure}

\begin{figure}
\begin{centering}
\includegraphics[scale=0.5]{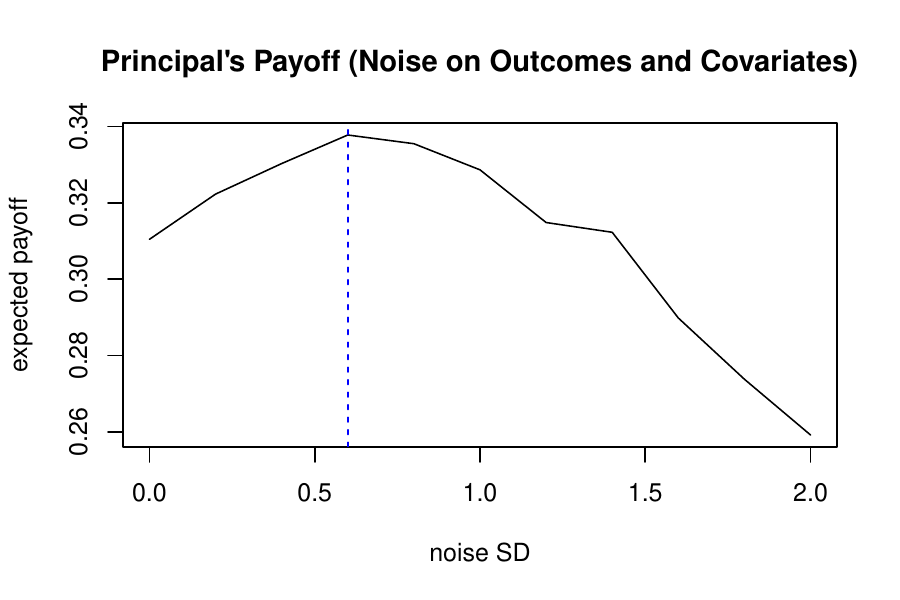} 
\par\end{centering}
\caption{\label{fig:noise_all}Expected utility of the principal  as a function of
the standard deviation of dissemination noise, if noise is added to both the outcome variable and the covariates. Again, we assume that  20\% of the agents are hackers
and 80\% are mavens.}
\end{figure}

Figure \ref{fig:comparative_statics} illustrates the comparative
statics of how the optimal level of dissemination noise varies with
changes in the environment. When the fraction of hackers increases,
more noise is optimal. 

Finally, Figure \ref{fig:noise_all} shows the principal's expected payoff as a function of the  noise standard deviation $\sigma_{noise}$ if dissemination noise is also added to the outcome variable (in addition to the covariates). This numerical result suggests that the optimal amount of noise is lower if, unlike in our theoretical results, noise is added to every variable in the dataset. 

\section{\label{sec:dynamicproblem}Dynamic Environment with Data Reuse}

The models from Sections \ref{sec:model} and \ref{sec:special} presume that a test performed on the raw dataset determines whether the agent's proposal is implemented as policy, but the raw data itself is never publicly revealed. In practice, if these tests determine  high-stakes policy decisions, it may be impossible to credibly conduct them on a secret dataset. Transparency concerns may require that   the data used to test proposals must be made public in a timely manner.\footnote{Indeed secret access is the focus of the work of \citet{dwork2015reusable}, who propose methods for differentiably-private access to the raw data. Their work is motivated by the same concerns over re-use that we turn to in this section. Our results in Section~\ref{sec:dynamicproblem} may be read as validating the use of the methods in \citet{dwork2015reusable}.}

We now turn to a model where the principal owns a dataset that is used by various researchers to study different questions over a long period of time. (For instance, the US Census is only conducted every ten years and the same dataset is used to answer a large number of research questions in the social sciences.) As researchers propose policy interventions to address different research questions, their proposals must all be tested in a transparent way. Suppose the principal is legally bound to periodic releases of noisy data -- multiple noisy ``waves'' of the data are made public over time, which are then used to test the
most recent policy proposals. A policy proposal made in April
2024, for example, would be tested using the May 2024 release of noisy
data. An October proposal would be tested against the November release, and so on. We assume that the principal cannot delay the release of the noisy data used to test past proposals, even though such public data releases become accessible to future $p$-hackers who will try to data mine the same dataset. 

In our dynamic model, time works in the hackers' favor, as $p$-hacking becomes easier when data is reused. Hackers can exploit
all past releases of the noisy data to propose policies that are increasingly
likely to pass the policymaker's test. As we shall see, in the end,
the principal will rationally give up on adding noise to test data, and will release
the original raw dataset. At that point, the hacker can always find
misguided policies that pass the test and get implemented by the policymaker.
The promise of using noisy data to deal with $p$-hacking is real,
but finitely lived.

Time is discrete and infinite:
$t=0,1,2,\ldots$ In period 0, the principal receives a raw dataset
as before, but with the following changes compared to the baseline
static model:
\begin{itemize}
\item The dataset contains a continuum of covariates, $(X^a)_{a \in A}$. But, there is a countably infinite number of outcome variables, $(Y^{t})_{t=0,1,2,...}$.
A true cause $a_{t}^{*}\in A$ is drawn uniformly at random from $A$
for each outcome $Y^{t}$. The principal does not receive more data in later periods: no additional outcomes, covariates, or observations will arrive. The ``dynamic'' aspect of the model concerns how a fixed dataset must be reused over time. 
\item Suppose for simplicity the maven knows
the true cause of every outcome, so red herrings are not generated.
\item For simplicity, suppose there is only a single observation $N=1$
of the  outcomes and the covariates. This is for tractability so that
the state space of the resulting model becomes one-dimensional. It
is, of course, an extreme version of the assumption of a wide dataset.
\item Suppose the unconditional distribution of each outcome variable and
each covariate is Bernoulli($\kappa$) for some $0<\kappa<1$. The
baseline model looked at the case where $\kappa=0.5$.
\end{itemize}

These simplifying changes allow us to focus on the intertemporal trade-offs facing the principal. In each period, she generates a noisy version of the raw dataset to evaluate the agent's proposal. This testing dataset must be publicly released before another agent uses the same dataset to propose the causal covariate behind another outcome variable.  The principal will have a short-term incentive to decrease noise and thus improve the quality of tests for current proposals, but a long-term incentive to increase noise so as to plant baits for future hackers. The intertemporal trade-off will be affected by a ``stock of randomness'' that is decreased as time passes.

In each period, the principal releases a possibly
noisy version of the raw data: in period $t$, she releases a dataset $\mathcal{D}(q_{t})$
after adding a level $q_{t}$ of noise to the raw dataset. The parameter
$q_{t}$ is, as before, the probability that each $X^{a}$ is flipped. (As in the baseline model, the principal only perturbs covariates, not outcome variables.) 
Each release is a \emph{testing dataset}. Note that the principal always adds noise to the raw dataset, not to the previous iteration of the noisy dataset. 

In each period $t=1,2,\ldots$, society is interested in enacting
a policy to target the true cause behind the outcome $Y^{m(t)}$,
where $m(t)$ is the $t$-th outcome with a realization of 1 in the
principal's dataset. So, in the dynamic model we interpret an outcome
realization of 0 as benign, and an outcome realization of 1 as problematic
and requiring intervention. A short-lived agent arrives in each  period $t$; the agent is a hacker with probability $h$ and a maven with complementary probability.  If the agent is a maven, recall that we are assuming the agent
always knows and proposes the true cause of $Y^{m(t)}$. If the agent
is a hacker, he uses all of the testing datasets released by the principal
up to time $t-1$ to make a proposal that maximizes the probability of being implemented. After receiving the agent's proposal
$a$, the principal generates and publishes period $t$'s testing
dataset $\mathcal{D}(q_{t})$. The policymaker implements policy $a$
if $Y^{m(t)}=\hat{X}^{a}$ in this period's (possibly noisy) testing
dataset. In period $t,$ the principal gets a payoff of 1 if the true
cause for $Y^{m(t)}$ passes the test, -1 if any other covariate passes
the test, and 0 if the proposal is rejected. The principal maximizes
expected discounted utility with discount factor $\delta\in (0,1)$

In each period $t\ge2$, a hacker proposes a policy $a$ with $\hat{X}^{a}=1$
in all of the past testing datasets. Such $a$ exists because
there are infinitely many policies. (In the first period, the hacker
has no information and proposes a policy uniformly at random.) Suppose
a covariate $a$ that shows as ``1'' in all the noisy testing datasets
up to period $t-1$ has some $b_{t}$ chance of being a \emph{bait}, that
is $X^{a}\ne1$ in the raw data. Then the principal's expected payoff today
from releasing a testing dataset with noise level $q_t$ is 
\[
u(q_{t};b_{t}):=(1-h)(1-q_{t})+h(-(1-b_{t})(1-q_{t})-b_{t}q_{t}).
\]
In the expression for $u$, $(1-h)(1-q_{t})$ is the probability that
the agent is a maven and the value of the true cause for $Y^{t}$
in the period $t$ testing dataset, $\hat{X}^{a_{t}^{*}}$, has not
been flipped. The term $(1-b_{t})(1-q_{t})$ represents the probability
that the hacker's policy is not a bait and its covariate value has
not been flipped in the testing dataset. Finally, $b_{t}q_{t}$ is
the probability that the hacker's policy is a bait, but its covariate
value has been flipped in the testing dataset.

The principal's problem is similar to an intertemporal consumption
problem. We can think of $b_{t}$ as a stock variable that gets consumed
over time. But rather than a stock of some physical capital, it measures
the \emph{stock of randomness} in the principal's raw dataset. This stock 
depletes as more and more noisy versions of the data are made public.
We view $u(q;b)$ as the principal's flow utility from ``consuming''
$\frac{1}{2}-q$, where the stock of randomness left
is $b$, and the stock evolves according to $b_{t+1}=\frac{b_{t}q_{t}}{(1-b_{t})(1-q_{t})+b_{t}q_{t}}$.

The intertemporal trade-offs faced by the principal are captured by  $\frac{\partial u}{\partial q}<0,$ $\frac{\partial  u}{\partial b}>0,$ 
and $\frac{\partial b_{t+1}}{\partial q_{t}}>0.$ In words, adding less noise
to the testing dataset today gives higher utility today, since the
maven's correct policy is more likely to pass the test, and the hacker's
misguided policy is more likely to get screened out. But this depletes
the stock of randomness faster, and makes it harder to defend against
future hackers.

Our next result shows that, in every optimal solution to the principal's
problem, the stock of randomness is always depleted in finite time.
The basic idea is that noise has decreasing returns: the marginal
effect of noise on slowing the decline of $b_{t+1}$ is reduced as
$b_{t}$ decreases. There is a time $t^{*}$ at which the principal
abandons the use of noise. 
\begin{prop}
\label{prop:dynamic} Suppose that $h<1/2$ and $\kappa \in(0,1)$. Let
$\{(b_{t},q_{t})\}$ be a solution to the principal's problem. Then,
for all $t$, $q_{t}<1/2$ and $b_{t}$ is (weakly) monotonically
decreasing. There $t^{*}$ such that
\begin{itemize}
\item If $t<t^{*}$ then $b_{t+1}<b_{t}$;
\item If $t\geq t^{*}$ then $q_{t}=0$ and $b_{t+1}=0$.
\end{itemize}
\begin{figure}
\begin{centering}
\includegraphics[scale=0.7]{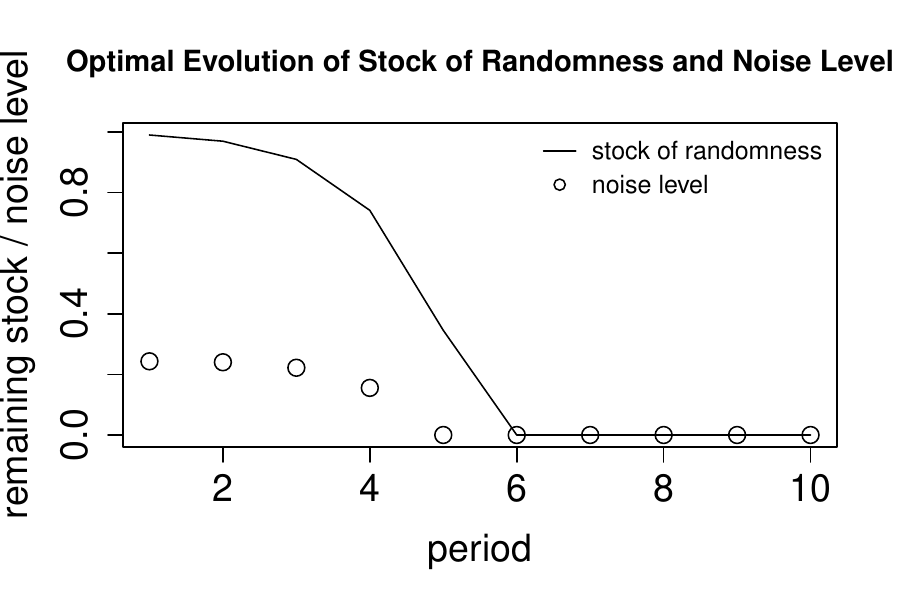}
\par\end{centering}
\caption{\label{fig:dynamic_solution}The evolution of the stock of randomness
(i.e., the probability $b_{t}$ that the hacker's best guess $a$
is a bait with $X^{a}\protect\ne1$ in the raw dataset) and the noise
level in an environment with 45\% hackers, discount factor $\delta=0.99,$
and an unconditional probability $\kappa=0.01$ that each covariate
is equal to 1.}
\end{figure}
\end{prop}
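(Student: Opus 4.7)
The plan is to work with the value function $V(b) = \max_{q \in [0,1/2]}\{u(q,b) + \delta V(b'(q,b))\}$, where $b'(q,b) = bq/[(1-b)(1-q)+bq]$. First I would compute $V(0) = (1-2h)/(1-\delta)$, since $b=0$ is absorbing and $u(q,0) = (1-q)(1-2h)$ is maximized at $q=0$. Next I establish two bounds on $V$: the monotonicity $V(b) \ge V(0)$ (play $q=0$ once and obtain $V(0)+hb$), and the Lipschitz-type inequality $V(b) - V(0) \le hb/(1-\delta)$. The latter follows by comparing any policy $\pi$ starting from $b$ with the same $\pi$ starting from $0$: since the state under $\pi$ from $b$ is weakly decreasing and bounded by $b$, one gets $V^\pi(b) - V(0) \le \sum_{t \ge 1}\delta^{t-1}[u(q_t,b_t)-u(q_t,0)] = \sum_{t \ge 1}\delta^{t-1}hb_t(1-2q_t) \le hb/(1-\delta)$.

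To show $q_t < 1/2$, I use stationarity: if $q_t = 1/2$ were optimal, then $b_{t+1}=b_t$ (direct computation), so the Bellman equation forces $V(b_t) = u(1/2,b_t)/(1-\delta) = (1-2h)/(2(1-\delta))$, using $u(1/2,b) = (1-2h)/2$ identically in $b$. Combined with $V(b_t) \ge V(0) = (1-2h)/(1-\delta)$ and $h<1/2$, this produces the impossible inequality $1 \le 1/2$. Weak monotonicity of $b_t$ then follows from the equivalence $b_{t+1}\le b_t \iff q_t \le 1/2$, proved by elementary algebra, with strict inequality whenever $b_t > 0$ and $q_t > 0$.

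The key step is to show that $t^*$ exists. I derive a threshold $\bar b > 0$ below which $q=0$ is the unique optimum, by comparing the value at $q=0$ to that at any $q>0$ using the Lipschitz bound on $V$: this yields a sufficient condition $2hb(2\delta-1) \le (1-\delta)(1-2h)$ after substituting the worst case $g = 1/2$ (attained at $q=1/2$ for $b<1/2$). Now suppose for contradiction that $q_t > 0$ for every $t$. Then $b_t$ is strictly decreasing and converges to some $b_\infty \ge 0$. If $b_\infty > 0$, the ratio $b_{t+1}/b_t \to 1$ forces $q_t \to 1/2$ and hence $u(q_t,b_t) \to (1-2h)/2$; but then switching to $q=0$ at some large $T$ yields continuation value $(1-2h)/(1-\delta)+hb_T$, which strictly dominates the candidate's continuation $\approx (1-2h)/(2(1-\delta))$ because $h<1/2$, contradicting optimality. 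If instead $b_\infty = 0$, then $b_t < \bar b$ for large $t$, so the unique optimum is $q_t = 0$, again a contradiction.

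Setting $t^* = \min\{t : q_t = 0\}$ yields the advertised structure: for $t < t^*$ one has $q_t > 0$ and $b_t > 0$, hence $b_{t+1} < b_t$; at $t^*$, $b_{t^*+1} = 0$; and for $t > t^*$, $b_t = 0$ makes $q_t = 0$ optimal since $V(0)$ is attained by this action. The main obstacle I anticipate is pinning down the threshold $\bar b > 0$: it requires both the Lipschitz-type bound on $V$ and the observation that the worst case of the comparison between $q=0$ and $q>0$ corresponds to the minimum of $g=(1-b)(1-q)+bq$ over $q\in[0,1/2]$, which is attained at $q=1/2$ when $b<1/2$.
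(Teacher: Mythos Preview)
Your proof is correct and shares the paper's overall architecture (prove $q_t<1/2$ via stationarity, prove $b_t$ monotone, obtain a threshold below which $q=0$ is optimal, rule out a positive limit $b_\infty$ by a deviation to $q=0$), but the technical engine is genuinely different. The paper first establishes that the value function is convex by showing the Bellman operator preserves convexity---this relies on the concavity of the transition $b\mapsto bq/[(1-b)(1-q)+bq]$ for $q\le 1/2$---and then uses the subdifferential $\partial v$, together with its monotonicity and boundedness, to show that the marginal benefit of noise vanishes as $b\to 0$. You instead extract the elementary two-sided estimate $V(0)+hb\le V(b)\le V(0)+hb/(1-\delta)$ directly from the flow-payoff identity $u(q,b)-u(q,0)=hb(1-2q)$ and the fact that $b_t$ is weakly decreasing under \emph{any} feasible policy; this already suffices to produce an explicit threshold $\bar b$ via the inequality $(1-\delta)(1-2h)\ge 2hb(2\delta-1)$. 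Your route is more elementary (no convexity, no subdifferentials) and delivers a quantitative threshold; the paper's convexity argument is more structural and would transfer more easily if the flow payoff were nonlinear in $b$. One minor remark: your clause ``strict inequality whenever $b_t>0$ and $q_t>0$'' is exactly what you use in the last paragraph, but note that $b_{t+1}<b_t$ also holds when $b_t>0$ and $q_t=0$ (then $b_{t+1}=0$); the only thing that matters is $q_t<1/2$, which you have already secured.
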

Figure \ref{fig:dynamic_solution} shows an example with $\kappa=0.01$,
$\delta=0.99,$ and $h=0.45.$ In period 1, a hacker has a 1\% chance
of guessing a covariate that would validate in the raw dataset. The
principal releases noisy testing datasets at the end of periods 1,
2, 3, and 4. In period 5, a hacker can look for a covariate that has
a value of ``1'' in each of the four testing datasets from the previous
periods, and propose it as the model for today's outcome variable
$Y^{m(5)}.$ This proposal will validate in the raw dataset with more
than 65\% probability, reflecting a weakening defense against $p$-hackers
as data is reused and the stock of randomness depletes. At this point,
the principal finds it optimal to give up on dissemination noise and
releases the raw dataset as the testing dataset at the end of period
5. In every subsequent period, both agent types will propose passing
policies, so the policymaker implements correct policies 55\% of the
time and misguided policies 45\% of the time.

\section{Concluding Discussion}
\label{sec:conclusion}
We argue that infusing data with noise before making data public  has benefits beyond the privacy protection guarantees for which the practice is currently being used. Noise baits uninformed $p$-hackers into finding correlations that can be shown to be spurious. The paper investigates these ideas in a simple model that captures the trade-off between preventing hackers from passing off false findings as true, and harming legitimate research that seeks to test an ex-ante  hypothesis.

\bibliographystyle{ecta}
\bibliography{p_hacking}
\appendix
\section{Proofs}

\subsection{Proof of Lemma \ref{lem:agent_behavior_new}}
\begin{proof}
For any covariate $a \ne a^*$, we have $\mathbb{P}[X_{n}^{a}=Y_{n}\mid\hat{X}_{n}^{a}=Y_{n}]=1-q$
and $\mathbb{P}[X_{n}^{a}=Y_{n}\mid\hat{X}_{n}^{a}\ne Y_{n}]=q$ for each observation $n$. Since $1-q>q,$
the distribution of the number of observations that match the outcome
in the raw data for a covariate $\hat{a}$ with $\hat{X}_{n}^{\hat{a}}=Y_{n}$ for every $n$
first-order stochastically dominates that of any other covariate. Reporting such a covariate thus maximizes the chance of passing the policymaker's test, for any $\gamma$. Since the hacker will never find $a^*$, he will only maximize this passing chance.  Similarly, among those covariates $\hat{a}\notin\{a^{*},a^{\redherr}\}$, the
maven's expected payoff is maximized by covariates $\hat{a}$ with
$\hat{X}_{n}^{\hat{a}}=Y_{n}$ for every $n$.

It remains to show how the maven optimally chooses between $a^{*}$
and $a^{\redherr}$ based on his information.  Suppose the maven learns the true
cause and the red herring are in the set $\{a',a''\},$ where $\hat{X}^{a'}$
matches $Y$ in $k_{1}$ observations and $\hat{X}^{a''}$ matches
$Y$ in $k_{2}$ observations with $k_{1}<k_{2}$.

% Since $w>2/3,$ it is suboptimal for the maven to propose any covariate
% other than $a'$ or $a''$. Doing so gets an expected payoff of no
% more than 1/3, but choosing one of $a'$ or $a''$ at random gives
% strictly more than 1/3.

If $a^{*}=a'$, the maven's data has likelihood $\tilde{p}^{k_{1}}(1-\tilde{p})^{N-k_{1}}\cdot(1/2)^{N}$
where $\tilde{p}=\psi(1-q)+(1-\psi)q>1/2.$ This is because  $\mathbb{P}[\hat{X}_{n}^{a^{*}}=Y_{n}]=\tilde{p}$
for every observation $n$, while $\mathbb{P}[\hat{X}_{n}^{a^{\redherr}}=Y_{n}]=1/2$
for every observation $n$. If $a^{*}=a''$, then the data likelihood
is $\tilde{p}^{k_{2}}(1-\tilde{p})^{N-k_{2}}\cdot(1/2)^{N}$ by the
same reasoning. This second likelihood is larger, because $k_{2}>k_{1}$
and $\tilde{p}>1/2.$ So the covariate that matches the outcome in
more observations has a higher posterior probability of being the
true cause.

Let $a\in\{a',a''\}$ and consider the probability that $a$ passes
the policymaker's test. Conditional on $a=a^{*}$, if $\hat{X}_{n}^{a}=Y_{n}$, then there
is $\frac{\psi(1-q)}{\psi(1-q)+(1-\psi)q}$ chance that $X_{n}^{a}=Y_{n}$.
If $\hat{X}_{n}^{a}\ne Y_{n}$, then there is $\frac{\psi q}{\psi q+(1-\psi)(1-q)}$
chance that $X_{n}^{a}=Y_{n}$. Conditional on $a=a^{\redherr}$,
if $\hat{X}_{n}^{a}=Y_{n}$, then there is $1-q$ chance that $X_{n}^{a}=Y_{n}$.
If  $\hat{X}_{n}^{a}\ne Y_{n}$, then there is $q$ chance that
$X_{n}^{a}=Y_{n}$. By simple algebra, $\psi>1/2$ implies that $\frac{\psi(1-q)}{\psi(1-q)+(1-\psi)q}>1-q$
and $\frac{\psi q}{\psi q+(1-\psi)(1-q)}>q.$ So, the covariate $a''$
with a higher posterior probability of being the true cause also has
a higher chance of passing the policymaker's test. The maven gets
strictly higher expected utility from proposing $a''$ than $a'$ because the probability of being right, and of passing the test, are higher under $a''$ than under $a'$.
\end{proof}

\subsection{Proof of Proposition \ref{prop:asymptotic_general}}
\begin{proof}
Let such a $q$ be fixed. 

When the hacker proposes $\hat{a}$ with $\hat{X}_{n}^{\hat{a}}=Y_{n}$
for every $1\le n\le N$, the number of observations $n$ where $X_{n}^{\hat{a}}=Y_{n}$
has the distribution $\text{Binom}(N,1-q).$ Let $\ell_N$ be the probability that the hacker's proposal is accepted. Since $1-q<\gamma,$ we have  $\ell_N \to 0$ as
$N\to\infty$. So the principal's utility conditional on facing a
maven converges to 0 when $N\to\infty.$

For the maven, consider the following three events. 

$E_{1}$: $|\{n:\hat{X}_{n}^{a^{*}}=Y_{n}\}|\le|\{n:\hat{X}_n^{a^{\redherr}}=Y_{n}\}|$.
The distribution of the LHS is $\text{Binom}(N,\psi(1-q)+(1-\psi)q)$,
where $\psi(1-q)+(1-\psi)q>1/2$ since $\psi>1/2$ and $q<1/2.$ Also,
the distribution of the RHS is $\text{Binom}(N,1/2).$ By the law of large
numbers, the probability of $E_{1}$ goes to 0 when $N\to\infty$.

$E_{2}:|\{n: X_{n}^{a^{*}}=Y_{n}\}|<\left\lfloor \gamma\cdot N\right\rfloor $.
The LHS has the distribution $\text{Binom}(N,\psi)$ where $\psi>\gamma$.
So by the law of large numbers, the probability of $E_{2}$ also approaches
0 when $N\to\infty.$

Let $E_{3}$ be the event that the maven optimally proposes a covariate
other than $a^{*}$ or $a^{\redherr}$. Consider the feasible strategy
for the maven of picking between the two covariates identified by his private information
based on which one matches the outcome in more observations. Except on the event
$E_{1}\cup E_{2}$, this feasible strategy correctly identifies $a^{*}$
and the proposal gets accepted by the policymaker. Therefore, the
expected payoff of this feasible strategy for the maven is at least
$1-\mathbb{P}[E_{1}\cup E_{2}]$ for any value of $0\le w_{\text{maven}}\le1.$
When the maven proposes some $\hat{a}\notin\{a^{*},a^{\redherr}\}$
with $\hat{X}_{n}^{\hat{a}}=Y_{n}$ for every $n$, he gets $1-w_{\text{maven}}$
if the proposal is accepted and 0 otherwise, and acceptance happens
with probability $\ell_{N}$. In order for the maven's optimal strategy
to achieve at least an expected payoff of $1-\mathbb{P}[E_{1}\cup E_{2}]$,
we therefore need 
\[
\mathbb{P}[E_{3}]\cdot(1-w_{\text{maven}})\cdot\ell_{N}+(1-\mathbb{P}[E_{3}])\cdot1\ge1-\mathbb{P}[E_{1}\cup E_{2}]\iff\mathbb{P}[E_{3}]\le\frac{\mathbb{P}[E_{1}\cup E_{2}]}{1-(1-w_{\text{maven}})\ell_{N}}.
\]
Note that we have $\mathbb{P}[E_{3}]\to0$ as $N\to\infty,$ since
we have both $\mathbb{P}[E_{1}\cup E_{2}]\to0$ and $\ell_{N}\to0$
as $N\to\infty$.

Outside of the event $E_{1}\cup E_{2}\cup E_{3}$, the maven's optimal
strategy proposes $a^{*}$ and this proposal passes the policymaker's
test. So the principal's utility conditional on facing a maven converges
to $1$ when $N\to\infty.$

\end{proof}

\subsection{Proof of Proposition \ref{prop:N_bound}}

\begin{proof}

Consider the events $E_{1},E_{2},$ and $E_{3}$ in the proof of Proposition
\ref{prop:asymptotic_general}.

First, we bound the probability of $E_{1}$. Using Hoeffding's inequality,
since $|\{n:\hat{X}_{n}^{a^{*}}=Y_{n}\}|$ has a binomial distribution
with success rate $\psi(1-q)+(1-\psi)q$, 
\[
\mathbb{P}[|\{n:\hat{X}_{n}^{a^{*}}=Y_{n}\}|\le\frac{\psi(1-q)+(1-\psi)q+0.5}{2}N]\le\exp(-2N[\frac{\psi(1-q)+(1-\psi)q-0.5}{2}]^{2}).
\]
Similarly, since $|\{n:\hat{X}_{n}^{a^{\redherr}}=Y_{n}\}|$ has a
binomial distribution with success rate 1/2, 
\[
\mathbb{P}[|\{n:\hat{X}_{n}^{a^{\redherr}}=Y_{n}\}|>\frac{\psi(1-q)+(1-\psi)q+0.5}{2}N]\le\exp(-2N[\frac{\psi(1-q)+(1-\psi)q-0.5}{2}]^{2}).
\]
Bounding the probability of each of these two events by $h/32$ requires
$N\ge\frac{-2\ln(h/32)}{(\psi(1-q)+(1-\psi)q-0.5)^{2}}$. This ensures
$\mathbb{P}[E_{1}]\le h/16.$

Also by Hoeffding's inequality, $\mathbb{P}[E_{2}]\le\exp(-2N[\psi-\gamma]^{2})$.
So whenever $N\ge\frac{-\ln(h/16)}{2(\psi-\gamma)^{2}},$ we have
$\mathbb{P}[E_{2}]\le h/16.$

The probability that the hacker's proposal gets accepted is bounded
by $\exp(-2N\cdot[q-(1-\gamma)]^{2})$ by Hoeffding's inequality.
This quantity is less than $h/8$ whenever $N\ge\frac{-\ln(h/8)}{2\cdot[q-(1-\gamma)]^{2}}.$

From the proof of Proposition \ref{prop:asymptotic_general}, we have
$\mathbb{P}[E_{3}]\le\frac{\mathbb{P}[E_{1}\cup E_{2}]}{1-(1-w_{\text{maven}})\ell_{N}}\le\frac{\mathbb{P}[E_{1}\cup E_{2}]}{1-\ell_{N}}$.
When $\mathbb{P}[E_{1}\cup E_{2}]\le h/8$ and $\ell_{N}\le h/8$,
we have $\mathbb{P}[E_{3}]\le h/4.$

So whenever $N$ satisfies the bound in the statement of the proposition,
we have $\mathbb{P}[E_{1}\cup E_{2}\cup E_{3}]\le(3h)/8$. The principal's
expected payoff is at least $-1\cdot(\frac{3h}{8})+(1-\frac{3h}{8})$
when the agent is a maven, and at least $-1\cdot(h/8)$ when the agent
is a hacker. So, the principal's total expected payoff is larger than
$(1-h)\cdot[-1\cdot(\frac{3h}{8})+(1-\frac{3h}{8})]+h\cdot(-h/8)\ge1-\frac{7h}{4}-\frac{h^{2}}{8}\ge1-2h.$
By comparison, the principal's payoff is smaller than $1-2h$ when
$q=0.$

\end{proof}

\subsection{Proof of Lemma \ref{lem:agent_behavior}}
\begin{proof}
Any strategy of the hacker leads to zero probability of proposing
the true cause, so the hacker finds it optimal to just maximize the
probability of the proposal passing the test. If the hacker proposes
a covariate that matches $Y$ in $n_{1}$ observations and mismatches
in $n_{0}$ observations, then the distribution of the number of matches
in the raw dataset is $\text{Binom}(n_{0},q)+\text{Binom}(n_{1},1-q).$
A covariate that matches the outcome variable in every observation
in noisy dataset will have a distribution of $\text{Binom}(n_{0}+n_{1},1-q)=\text{Binom}(n_{0},1-q)+\text{Binom}(n_{1},1-q)$
as its number of matches in the raw dataset, and $\text{Binom}(n_{0},1-q)$
strictly first-order stochastic dominates $\text{Binom}(n_{0},q)$
if $n_{0}\ge1$ and $q<1/2.$ Therefore the hacker finds it optimal
to propose any $a\in A$ that satisfies $\hat{X}_{n}^{a}=Y_{n}$ for
every $1\le n\le N$.

For the maven, since $w_{\text{maven}} >  1/2$,
it is never optimal to propose covariates other than $a^{*}$ or $a^{\redherr}$
since these have zero chance of being the true cause. Out of the two
candidate covariates that the maven narrows down to, the one that
matches $Y$ in more observations in the noisy dataset has a higher
posterior probability of being the true cause. Note that if the maven
proposes $a^{\redherr}$, the policymaker always rejects the proposal
since $X^{a^{\redherr}}=1-Y$ in the raw dataset. Also, if the maven
proposes $a^{*}$, it always passes the test since $X^{a^{*}}=Y$
in the raw dataset.
\end{proof}

\subsection{Proof of Lemma \ref{lem:lemmarginalutility}}
\begin{proof}
The hacker picks a covariate $\hat{a}$ where $\hat{X}_{n}^{\hat{a}}=Y_{n}$
for every $n$. Given that the policymaker is using the most stringent
test, we get $V_{\text{hacker}}(q)=(1-q)^{N}$. For the maven, there are $2N$ bits of observations on the variables
$X^{a^{*}}$ and $X^{a^{\redherr}}$. If strictly fewer than $N$
bits are flipped, then the maven proposes the correct policy (and therefore it passes the test). If
exactly $N$ bits are flipped, then the maven recommends the correct
policy 1/2 of the time. So,

\[
V_{\text{maven}}(q)=(\mathbb{P}[\text{Binom}(2N,q)<N]+\frac{1}{2}\mathbb{P}[\text{Binom}(2N,q)=N])
\]

We have 
\begin{align*}
V'_{\text{maven}}(q)= & \frac{d}{dq}(\mathbb{P}[\text{Binom}(2N,q)<N]+\frac{1}{2}\mathbb{P}[\text{Binom}(2N,q)=N])\\
= & \frac{d}{dq}(\mathbb{P}[\text{Binom}(2N,q)\le N]-\frac{1}{2}\mathbb{P}[\text{Binom}(2N,q)=N])\\
= & -2N\cdot\mathbb{P}[\text{Binom}(2N-1,q)=N]-\frac{1}{2}\frac{d}{dq}(q^{N}(1-q)^{N}\binom{2N}{N}),
\end{align*}
where the last step used the identity that $\frac{d}{dq}\mathbb{P}[\text{Binom}(M,q)\le N]=-M\cdot\mathbb{P}[\text{Binom}(M-1,q)=N]$.
Continuing, 
\begin{align*}
 & -2N\cdot q^{N}(1-q)^{N-1}\binom{2N-1}{N}-\frac{1}{2}\binom{2N}{N}N(q^{N-1}(1-q)^{N}-q^{N}(1-q)^{N-1})\\
= & \binom{2N-1}{N}Nq^{N-1}(1-q)^{N-1}\left(-2q-\frac{1}{2}\cdot2\cdot((1-q)-q)\right),
\end{align*}
using the identity $\binom{2N}{N}=2\cdot\binom{2N-1}{N}$. Rearranging
shows the lemma. 
\end{proof}

\subsection{Proof of Proposition \ref{prop:mainstatic}}
\begin{proof}
Using the Lemma \ref{lem:lemmarginalutility}, 
\[
\frac{d}{dq}[-hV_{\text{hacker}}(q)+(1-h)V_{\text{maven}}(q)]=hN(1-q)^{N-1}-(1-h)\binom{2N-1}{N}Nq^{N-1}(1-q)^{N-1}.
\]
The FOC sets this to 0, so $h-(1-h)\binom{2N-1}{N}q^{N-1}=0$. Rearranging
gives $q^{*}=\left(\frac{h}{1-h}\frac{1}{\binom{2N-1}{N}}\right)^{1/(N-1)}.$
We know $h\mapsto\frac{h}{1-h}$ is increasing, so $\frac{\partial q^{*}}{\partial h}>0$.
We know $N\mapsto\binom{2N-1}{N}$ is increasing in $N$, therefore
both the base and the exponent in $q^{*}$ decrease in $N$, so $\frac{\partial q^{*}}{\partial N}<0.$
\end{proof}

\subsection{Proof of Proposition \ref{prop:optimal_threshold}}
\begin{proof}
The principal's expected utility conditional on the agent being a
maven is the same for every $\underline{N}\in\{1,...,N\}$, since
the maven always proposes either $a^{*}$ or $a^{\redherr}$ depending
on which covariate matches $Y$ in more observations, and the proposal
passes the $\underline{N}$ threshold if and only if it is $a^{*}$,
since $X^{a^{\redherr}}=1-Y$ does not match the outcome in any observation
in the raw dataset.

As shown in the proof of Lemma \ref{lem:agent_behavior}, the distribution
of the number of matches between $X^{a}$ and $Y$ in the raw dataset
increases in the first-order stochastic sense with the number of matches
between $\hat{X}^{a}$ and $Y$ in the noisy dataset. So, for any
test threshold $\underline{N},$ the hacker finds it optimal to propose
a covariate $a$ with $\hat{X}_{n}^{a}=Y_{n}$ for every $n.$

Therefore, the only effect of lowering $\underline{N}$ from $N$
is to increase the probability of the hacker's misguided policies
passing the test.
\end{proof}

\subsection{Proof of Lemma \ref{lem:hacker_non_iid}}
\begin{proof}
Let $y\in\{0,1\}^{N}$ and $q\in[0,1/2]$ be given. Let $\mu_{q}\in\Delta(\{0,1\}^{N}\}$
be the distribution of covariate realizations in the noisy dataset
with $q$ level of noise. We have $\mathbb{P}[X^{a}=y\mid\hat{X}^{a}=y]=\frac{(1-q)^{N}\cdot\mu(y)}{\mu_{q}(y)}$.
Also, for any $x\in\{0,1\}^{N}$ so that $y$ and $x$ differ in $k$
of the $N$ coordinates, we have $\mathbb{P}[X^{a}=y\mid\hat{X}^{a}=x]=\frac{(1-q)^{N-k}q^{k}\cdot\mu(y)}{\mu_{q}(x)}$.
Note that 
\[
\mu_{q}(x)=\sum_{z\in\{0,1\}^{N}}\mu(z)\cdot q^{D(z,x)}(1-q)^{N-D(z,x)}
\]
where $D(z,x)$ is the number of coordinates where $z$ differs from
$x.$ By the triangle inequality, $D(z,y)\le D(z,x)+D(x,y)=D(z,x)+k.$
This shows for every $z\in\{0,1\}^{N},$ 
\[
q^{D(z,y)}(1-q)^{N-D(z,y)}\ge q^{D(z,x)}(1-q)^{N-D(z,x)}\cdot(\frac{q}{1-q})^{k}.
\]
 So, 
\[
\mu_{q}(x)\ge(\frac{q}{1-q})^{k}\cdot\sum_{z\in\{0,1\}^{N}}\mu(z)\cdot q^{D(z,y)}(1-q)^{N-D(z,y)}=(\frac{q}{1-q})^{k}\mu_{q}(y).
\]
 This shows 
\[
\frac{(1-q)^{N}\cdot\mu(y)}{\mu_{q}(y)}\ge\frac{(1-q)^{N}\cdot\mu(y)}{\mu_{q}(x)\cdot(\frac{1-q}{q})^{k}}=\frac{(1-q)^{N-k}q^{k}\cdot\mu(y)}{\mu_{q}(x)}.
\]
\end{proof}

\subsection{Proof of Proposition \ref{prop:noniid}}
\begin{proof}
First, observe the maven will choose the covariate $a\in\{a^{*},a^{\redherr}\}$
whose noisy realization $\hat{X}^{a}$ matches the outcome $Y$ in
more observations, regardless of $\mu$. This is because the maven
learns two candidates $a_{1},a_{2}\in A$ and knows either $(X^{a_{1}}=Y,X^{a_{2}}=1-Y)$
or $(X^{a_{1}}=1-Y,X^{a_{2}}=Y)$, equally likely. The likelihood
of the former is $\frac{1}{2}\cdot q^{(N-m_{1})+m_{2}}(1-q)^{m_{1}+(N-m_{2})}$
and the likelihood of the latter is $\frac{1}{2}\cdot q^{m_{1}+(N-m_{2})}(1-q)^{(N-m_{1})+m_{2}}$,
where $m_{1},m_{2}$ count the numbers of observations $n$ where
$\hat{X}_{n}^{a_{1}}=Y_{n}$ an $\hat{X}_{n}^{a_{2}}=Y_{n}$, respectively.
Since $q\in[0,1/2],$ the first likelihood is larger if $m_{1}>m_{2},$
and vice versa. Also, maven's proposal is $a^{*}$ if and only if
it passes the policymaker's test. Thus we see that for any $\mu,$
$V_{\text{maven}}(q)$ is the same as when the observations are i.i.d.

Given the hacker's behavior in Lemma \ref{lem:hacker_non_iid}, to
prove $V_{\text{hacker}}^{'}(0)<0$ it suffices to show that for every
$y\in\{0,1\}^{N}$ and $\mu,$ we have $\frac{\partial}{\partial q}\left[\mathbb{P}[X^{a}=y\mid\hat{X}^{a}=y]\right]_{q=0}<0$.
For $z,x\in\{0,1\}^{N},$ let $D(z,x)$ count the number of coordinates
where $z$ differs from $x.$ Let $\mu_{q}\in\Delta(\{0,1\}^{N}\}$
be the distribution of covariate realizations in the noisy dataset
with $q$ level of noise. We may write (using the fact $N\ge2$) that
$\mu_{q}(y)=\mu(y)\cdot(1-q)^{N}+\mu(z:D(z,y)=1)\cdot(1-q)^{N-1}q+f(q^{2})$
where $f(q^{2})$ is a polynomial expression where every term contains
at least the second power of $q.$ Therefore, $\frac{\partial}{\partial q}\left[\frac{(1-q)^{N}\mu(y)}{\mu_{q}(y)}\right]_{q=0}$
is:{\small{}
\[
\mu(y)\cdot\left[\frac{-N(1-q)^{N-1}\mu_{q}(y)-(1-q)^{N}\cdot[-N\mu(y)(1-q)^{N-1}+\mu(z:D(z,y)=1)\cdot((1-q)^{N-1}+g(q))]}{(\mu_{q}(y))^{2}}\right]_{q=0}
\]
}{\small\par}

where $f(0)=0.$ Evaluating, we get $\mu(y)\cdot\frac{-N\mu(y)-[-N\mu(y)+\mu(z:D(z,y)=1)]}{(\mu(y))^{2}}=-\frac{\mu(z:D(z,y)=1)}{(\mu(y))}$.
Since $\mu$ has full support, both the numerator and the denominator
are strictly positive, so $\frac{\partial}{\partial q}\left[\mathbb{P}[X^{a}=y\mid\hat{X}^{a}=y]\right]_{q=0}<0.$
\end{proof}

\subsection{Proof of Proposition \ref{prop:generalredherring}}

In this proof we adopt the following notation: we write $d_{Y}$ for
the realized vector $Y_{n}$, $d^{a}$ for the realized vector $X_{n}^{a}$,
for the $a$th covariate. In the noisy data, we use $\tilde{d}^{a}$
for the realization of the noisy version of the $a$ covariate. As
in other results, it is without loss to analyze the case where $d_{Y}=\boldsymbol{1}$,
so the policy maker will only accept a proposal $a$ if it satisfies
that $d^{a}=\boldsymbol{1}$ in the raw data.

First, we derive the posterior probability of $d^{a}=\boldsymbol{1}$
given a realization of $\tilde{d}^{a}$ in the noisy dataset, and
the resulting behavior of the hacker and the maven. The $n$ component
of $\tilde{d}^{a}$ is denoted $\tilde{d}_{n}^{a}.$
\begin{lem}
\label{lem:prob_pass}Suppose that $\tilde{d}^{a}$ satisfies $\sum_{n}\tilde{d}_{n}^{a}=k$.
We have $\mathbb{P}[d^{a}=\boldsymbol{1}\mid\tilde{d}^{a}]=(1-q)^{k}(q)^{N-k}$.
In particular, the hacker chooses some action $a$ with $\tilde{d}^{a}=\boldsymbol{1}$,
and the maven chooses the policy with the higher number of 1's among
$\tilde{d}^{a^{*}}$ and $\tilde{d}^{a^{\redherr}}$. 
\end{lem}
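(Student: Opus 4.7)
The plan is to split the lemma into three parts: (i) the posterior formula $\mathbb{P}[d^a=\boldsymbol{1}\mid\tilde d^a]=(1-q)^k q^{N-k}$, (ii) the hacker's best response, and (iii) the maven's best response. The key structural observation underlying all three is that, with a continuum of covariates drawn from a common Bernoulli$(1/2)^N$ distribution, the set $\{a^*,a^{\redherr}\}$ has measure zero in $A$. So when computing conditional probabilities for any particular $a$ identified from the observed data, we can treat $d^a$ as a priori uniform over $\{0,1\}^N$ and independent of $Y$.

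For (i), I would apply Bayes' rule directly. The likelihood $\mathbb{P}[\tilde d^a=v\mid d^a=\boldsymbol{1}]$ equals $(1-q)^k q^{N-k}$, since entries with $v_n=1$ correspond to unflipped bits and entries with $v_n=0$ to flipped bits. Because iid Bernoulli$(q)$ noise applied to a uniform $\{0,1\}^N$ prior produces a uniform marginal, $\mathbb{P}[\tilde d^a=v]=(1/2)^N$, matching $\mathbb{P}[d^a=\boldsymbol{1}]=(1/2)^N$, and the ratio gives the claimed posterior.

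For (ii), the hacker's expected payoff reduces to $(1-w)\cdot\mathbb{P}[d^a=\boldsymbol{1}\mid\tilde d^a]$, because the event $\{a=a^*\}$ has probability zero under any measurable selection rule based on the data. Monotonicity of $(1-q)^k q^{N-k}$ in $k$, valid for $q<1/2$, then forces the hacker to pick any $a$ with $\tilde d^a=\boldsymbol{1}$; such an $a$ exists almost surely thanks to the continuum of independent covariates. For (iii), I would first use $w>1/2$ to restrict the maven's optimum to $\{a^*,a^{\redherr}\}$: any $a$ outside this pair yields at most $(1-w)<1/2$, while proposing the more plausible of $\{a^*,a^{\redherr}\}$ yields at least $1/2$ from the ``being right'' component alone. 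Within $\{a^*,a^{\redherr}\}$, the prior is symmetric and the likelihoods are independent across the two candidates, so Bayes yields posterior odds $((1-q)/q)^{k^*-k^{\redherr}}$, placing the higher posterior on whichever candidate has more $1$'s in the noisy dataset.

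The step I expect to be most delicate is closing the argument for the maven after narrowing to $\{a^*,a^{\redherr}\}$: the expected payoff from each candidate mixes a ``being right'' term with a ``passes but is wrong'' term, and one must verify that the total expected payoff is maximized by the posterior-maximizer. The plan here is a short symmetry check showing that the two candidates' ``pass but wrong'' contributions are in fact equal: both are proportional to $(1-q)^{k^*+k^{\redherr}}q^{2N-k^*-k^{\redherr}}$ divided by the common normalizer of the posterior, since the posterior weight on one candidate multiplies the passing likelihood of the other. Once this cancellation is established, the comparison collapses to the posterior on $\{a=a^*\}$, and the candidate with more $1$'s is optimal.
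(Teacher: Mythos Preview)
Your proposal is correct and follows the same Bayesian route as the paper for parts (i) and (ii): both use the fact that the marginal of the noisy covariate is uniform on $\{0,1\}^N$, so Bayes' rule gives the posterior $(1-q)^k q^{N-k}$, and monotonicity in $k$ pins down the hacker's choice.

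For part (iii) you actually go further than the paper. The paper's proof only compares the two likelihoods $(1-q)^{k_i}q^{N-k_i}\cdot 2^{-N}$ and stops, implicitly treating ``pick the higher posterior'' as obviously payoff-optimal. You correctly flag that this is not immediate, because the maven's objective also contains the $(1-w)$-weighted ``passes but is wrong'' term, and you resolve it with the symmetry observation that $(1-p_1)\cdot(1-q)^{k_1}q^{N-k_1}=(1-p_2)\cdot(1-q)^{k_2}q^{N-k_2}$. That cancellation is exactly right and is a genuine gap-filler relative to the paper's argument.

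One small imprecision: when you bound the payoff from the more plausible candidate in $\{a^*,a^{\redherr}\}$ below by $1/2$, the phrase ``from the `being right' component alone'' is not quite accurate, since $w\cdot p$ with $p\ge 1/2$ only gives $w/2$. The bound you want is that the passing probability is at least $p$ (because $a^*$ always passes), so the total payoff is at least $wp+(1-w)p=p\ge 1/2>1-w$. This is a wording fix only; the conclusion stands.
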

\begin{proof}
Consider any $\tilde{d}^{a}$ with $\sum_{n}\tilde{d}_{n}^{a}=k$.
In the noisy dataset, for any $q,$ every vector in $\{0,1\}^{N}$
is equally likely. So the probability of the data for policy $a$
having realization $\tilde{d}^{a}$ is $2^{-N}.$ The probability
of this realization in the noisy data and the realization being $d^{a}=\boldsymbol{1}$
in $\mathcal{D}$ is $2^{-N}\cdot(1-q)^{k}(q)^{N-k}.$ So the posterior
probability is $(1-q)^{k}(q)^{N-k}$.

The hacker chooses an action $a$ as to maximize $\mathbb{P}[d^{a}=\boldsymbol{1}\mid\tilde{d}^{a}].$
The term $(1-q)^{k}(q)^{N-k}$ is maximized when $k=N,$ since $0\le q\le1/2.$

The maven sees vectors with $k_{1},k_{2}$ numbers of 1's. The likelihood
of the data given the first action is the correct one is $(1-q)^{k_{1}}(q)^{N-k_{1}}\cdot2^{-N}$
(since all vectors are equally likely in the noisy dataset conditional
on $Y=\boldsymbol{1}$, for $a\ne a^{*}$). This is larger than $(1-q)^{k_{2}}(q)^{N-k_{2}}\cdot2^{-N}$
when $k_{1}\ge k_{2}.$ 
\end{proof}
Here is the expression for the principal's expected payoff as a function
of $q.$ 
\begin{lem}
\label{lem:generalredherringformulapayoffs} Let $A,C\sim\text{Binom}(1-q,N)$
and $B\sim\text{Binom}(1/2,N)$, mutually independent. The principal's
expected payoff after releasing a noisy dataset $\mathcal{D}(q)$
is 
\[
-h(1-q)^{N}+(1-h)\cdot\left[\sum_{k=0}^{N}\mathbb{P}(A=k)\cdot\left(\mathbb{P}(B<k)+\frac{1}{2}\mathbb{P}(B=k)-2^{-N}(\mathbb{P}(C>k)+\frac{1}{2}\mathbb{P}(C=k))\right)\right].
\]
\end{lem}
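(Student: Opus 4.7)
I would begin by applying Lemma \ref{lem:prob_pass} to fix the agents' actions: the hacker picks any $a$ with $\tilde{d}^a = \boldsymbol{1}$, and the maven picks between $a^*$ and $a^{\redherr}$ according to which of $\tilde{d}^{a^*}, \tilde{d}^{a^{\redherr}}$ has the larger number of 1's (random tie-breaking). As in earlier proofs, condition WLOG on the realization $d_Y = \boldsymbol{1}$, and split the principal's expected utility as $h \cdot (\text{hacker contribution}) + (1-h) \cdot (\text{maven contribution})$.

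For the hacker, since $a^*$ is drawn uniformly from the continuum $A = [0,1]$, the covariate picked by the hacker is almost surely not $a^*$, so each passing proposal contributes $-1$ to the principal. By Lemma \ref{lem:prob_pass} at $k = N$, the probability that a proposal $a$ with $\tilde{d}^a = \boldsymbol{1}$ satisfies $d^a = \boldsymbol{1}$ is $(1-q)^N$, giving a total hacker contribution of $-h(1-q)^N$, which is the first term in the claimed formula.

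For the maven, introduce $A := |\tilde{d}^{a^*}|$ and $B := |\tilde{d}^{a^{\redherr}}|$. Since $d^{a^*} = \boldsymbol{1}$ is deterministic, $A \sim \text{Binom}(N, 1-q)$. Since each coordinate of $d^{a^{\redherr}}$ is Bernoulli$(1/2)$ independently of $Y$ in this extension, and adding independent $q$-noise to a Bernoulli$(1/2)$ coordinate still yields Bernoulli$(1/2)$, we obtain $B \sim \text{Binom}(N, 1/2)$; the independence of $a^*$ and $a^{\redherr}$ makes $A$ and $B$ independent as well. I would then split the maven's payoff into (i) the principal's $+1$ from the maven picking $a^*$, which always passes, and (ii) the principal's $-1$ from the maven picking $a^{\redherr}$ and it passing the test. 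Case (i) expands to $\sum_k \mathbb{P}(A=k)[\mathbb{P}(B<k) + \tfrac{1}{2}\mathbb{P}(B=k)]$ by the standard tie-breaking argument.

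The crux is case (ii). Here I would compute the joint probability $\mathbb{P}[B = k,\, d^{a^{\redherr}} = \boldsymbol{1}] = \mathbb{P}[d^{a^{\redherr}} = \boldsymbol{1}] \cdot \mathbb{P}[B = k \mid d^{a^{\redherr}} = \boldsymbol{1}] = 2^{-N} \cdot \binom{N}{k}(1-q)^k q^{N-k} = 2^{-N} \cdot \mathbb{P}(C=k)$ with $C \sim \text{Binom}(N, 1-q)$, using the uniform prior on $d^{a^{\redherr}}$ and the noise kernel. Summing over the maven's choice rule (pick $a^{\redherr}$ iff $B > A$, randomizing on ties) and then over $k$, case (ii) contributes $-2^{-N}\sum_k \mathbb{P}(A=k)[\mathbb{P}(C>k) + \tfrac{1}{2}\mathbb{P}(C=k)]$. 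Adding cases (i) and (ii) with weight $(1-h)$ to the hacker term yields the claimed expression. The step that requires the most care is exactly this collapse in case (ii): one must keep careful track of the marginal distribution of $B$ (a fair Binomial) versus the joint distribution of $B$ together with the event $d^{a^{\redherr}} = \boldsymbol{1}$, and the latter is where both the $2^{-N}$ factor and the identification of $C$ with $\text{Binom}(N, 1-q)$ in the statement come from.
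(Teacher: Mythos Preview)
Your proposal is correct and follows essentially the same route as the paper's proof: both condition on $d_Y=\boldsymbol{1}$, handle the hacker term via Lemma~\ref{lem:prob_pass}, and for the maven decompose into the $+1$ event (picking $a^*$) and the $-1$ event (picking $a^{\redherr}$ and it passing). The only cosmetic difference is that the paper conditions on $d^{a^{\redherr}}=\boldsymbol{1}$ and identifies the conditional law of $|\tilde d^{a^{\redherr}}|$ with $C\sim\text{Binom}(N,1-q)$, whereas you compute the joint $\mathbb{P}[B=k,\,d^{a^{\redherr}}=\boldsymbol{1}]=2^{-N}\mathbb{P}(C=k)$ directly; these are the same calculation.
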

\begin{proof}
With probability $h,$ the agent is a hacker. By Lemma \ref{lem:prob_pass},
the hacker recommends a policy $\hat{a}$ with $\tilde{d}^{\hat{a}}=\boldsymbol{1}$,
which has $(1-q)^{N}$ chance of being accepted by the principal due
to $d^{\hat{a}}=\boldsymbol{1}$.

With probability $1-h,$ the agent is a maven. For the maven, $\sum_{n}\tilde{d}_{n}^{a^{*}}\sim\text{Binom}(1-q,N)$
and $\sum_{n}\tilde{d}_{n}^{a^{\redherr}}\sim\text{Binom}(1/2,N)$ are independent.
Whenever $\sum_{n}\tilde{d}_{n}^{a^{*}}>\sum_{n}\tilde{d}_{n}^{a^{\redherr}}$,
and with 50\% probability when $\sum_{n}\tilde{d}_{n}^{a^{*}}>\sum_{n}\tilde{d}_{n}^{a^{\redherr}}$,
the maven recommend $a^{*}$ by Lemma \ref{lem:prob_pass}, which
will be implemented by the principal.

When maven recommends $a^{\redherr}$, the principal only implements it (and
gets utility -1) if $d^{a^{\redherr}}=\boldsymbol{1}.$ The probability of
$d^{a^{\redherr}}=\boldsymbol{1}$ is $2^{-N},$ and the probability of $a^{\redherr}$
being recommended given $d^{a^{\redherr}}=\boldsymbol{1}$ and $\sum_{n}\tilde{d}_{n}^{a^{*}}=k$
is $\mathbb{P}(C>k)+\frac{1}{2}\mathbb{P}(C=k)$, interpreting $C$
as the number of coordinates that did not switch from $d^{a^{\redherr}}$
to $\tilde{d}^{a^{\redherr}}$.
\end{proof}
Now, with the formula for the principal's expected payoff in place,
we can evaluate the derivative at $q=0$ to prove Proposition \ref{prop:generalredherring}.
\begin{proof}
We apply the product rule. First consider $\frac{d}{dq}\mathbb{P}(A=k)|_{q=0}$.

We have $\mathbb{P}(A=k)=(1-q)^{k}q^{N-k}\binom{N}{k}.$ If $k<N-1,$
then every term contains at least $q^{2}$ and its derivative evaluated
at 0 is 0. For $k=N,$ we get $(1-q)^{N}$ whose derivative in $q$
is $-N(1-q)^{N-1}$, which is $-N$ evaluated at 0. For $k=N-1,$
we get $(1-q)^{N-1}q\cdot N$, whose derivative evaluated at 0 is
$N.$

We now evaluate, for $k=N-1,N$: 
\[
\left(\mathbb{P}(B<k)+\frac{1}{2}\mathbb{P}(B=k)-2^{-N}(\mathbb{P}(C>k)+\frac{1}{2}\mathbb{P}(C=k))\right)|_{q=0}
\]
When $k=N$ and $q=0$, $\mathbb{P}(B<N)=1-2^{-N}$, $\mathbb{P}(B=N)=2^{-N},$
$\mathbb{P}(C>N)=0,$ $\mathbb{P}(C=N)=1.$ So we collect the term
$-N((1-2^{-N})+\frac{1}{2}\cdot2^{-N}-2^{-N}\cdot\frac{1}{2})=-N(1-2^{-N})$.

When $k=N-1$ and $q=0$, $\mathbb{P}(B<N-1)=1-2^{-N}-N2^{-N}$, $\mathbb{P}(B=N-1)=N2^{-N}$,
$\mathbb{P}(C>N-1)=1,$ $\mathbb{P}(C=N)=0.$ So we collect 
\[
N(1-2^{-N}-N2^{-N}+\frac{1}{2}N2^{-N}-2^{-N})=N(1-2^{-N}[2+\frac{N}{2}]).
\]

Next, we consider terms of the form 
\[
\mathbb{P}(A=k)|_{q=0}\cdot\frac{d}{dq}(\mathbb{P}(B<k)+\frac{1}{2}\mathbb{P}(B=k)-2^{-N}(\mathbb{P}(C>k)+\frac{1}{2}\mathbb{P}(C=k)))|_{q=0}.
\]

Note that $\mathbb{P}(A=k)|_{q=0}=0$ for all $k<N$. The derivative
of $\mathbb{P}(C>k)$ is $\frac{d}{dq}(1-\mathbb{P}[C\le k])=-N\mathbb{P}[\text{Bin}(N-1,1-q)=k]$.
Evaluated at $q=0$, this is 0 except when $k=N-1$, but in that case
we have $\mathbb{P}(A=N-1)=0$ when $q=0.$

The derivative of $\mathbb{P}(C=k)$ evaluated at 0 is $-N$ for $k=N,$
$N$ for $k=N-1,$ 0 otherwise. But $\mathbb{P}(A=N-1)=0$ if $q=0,$
so we collect $1\cdot(-2^{-N})\frac{1}{2}(-N).$

Collecting the terms we have obtained, and adding up, we have that:
\begin{align*}
 & -N(1-2^{-N})+N(1-2^{-N}[2+\frac{N}{2}])+(-2^{-N})\frac{1}{2}(-N)\\
= & N\left[-1+2^{-N}+1-2^{-N}[2+\frac{N}{2}]+\frac{2^{-N}}{2}\right]\\
= & N\left[2^{-N}-2^{-N+1}-(N-1)2^{-N-1}\right]=-N(N+1)2^{-(N+1)}.
\end{align*}

Overall, then, using the formula for the principal's payoff from Lemma
\ref{lem:generalredherringformulapayoffs}, the derivative of payoffs
evaluated at $q=0$ is 
\[
hN-(1-h)N(N+1)2^{-(N+1)},
\]
the sign of which equals the sign of $h/(1-h)-(N+1)2^{-(N+1)}$.
\end{proof}

\subsection{Proof of Proposition \ref{prop:finite_K}}
\begin{proof}
Write $U_{K}(q)$ for the principal's expected utility from noise
level $q$ with $K$ covariates, $N$ observations, and $h$ fraction
of hackers. Write $U(q)$ for the principal's expected utility in
the model with the same parameters from Section \ref{sec:generalredherring},
but a continuum of covariates $A=[0,1].$ From Proposition \ref{prop:generalredherring},
$U'(0)>0,$ therefore there exists some $q'>0$ so that $U(q')>U(0).$

We argue that $U_{K}(q')>U(q')$ for every finite $K\ge2.$ Note that
a maven has the same probability of proposing the true cause when
$A=[0,1]$ and when $K$ is any finite number. This is because the
maven's inference problem is restricted to only $X^{a^{*}}$ and $X^{a^{\redherr}}$
and the presence of the other covariates does not matter. For the
hacker's problem, note that the optimal behavior of the hacker is
to propose the $a$ that maximizes the number of observations where
$\hat{X}^{a}$ matches the outcome variable $Y$ in the noisy dataset.
For a hacker who has no private information about $a^{*}$, such a
covariate has the highest probability of being the true cause and
the highest probability of passing the test. The principal's utility
conditional on the hacker passing the test when $A=[0,1]$ is -1,
but this conditional utility is strictly larger than -1 when $K$
is finite as the hacker has a positive probability of choosing the
true cause. Also, the probability of the hacker passing the test with
proposal $a$ only depends on the number of observations where $\hat{X}^{a}$
matches $Y$, and the probability is an increasing function of the
number of matches. When $A=[0,1],$ the hacker can always find a covariate
that matches $Y$ in all $N$ observations in the noisy dataset, but
the hacker is sometimes unable to do so with a finite $K.$ So overall,
we must have $U_{K}(q')>U(q')>U(0)$.

Finally, we show that $U_{K}(0)-U(0)=h\left[2\frac{(1-[1-(1/2)^{N}]^{K})}{(1/2)^{N}K}-1\right]+h$,
an expression that converges to $0$ as $K\to\infty.$ Clearly, if
noise level is 0 and $A=[0,1],$ then the principal's expected utility
when facing the hacker is -1. For the case of a finite $A$, note
$X^{a^{*}}$ is perfectly correlated with $Y$. Each of the remaining
$K-1$ covariates has probability $(1/2)^{N}$ of being perfectly
correlated with $Y$, so the number of perfectly correlated variables
is $1+B$, with $B\sim\text{Binom}((1/2)^{N},K-1)$.

The hacker will recommend a perfectly correlated action at random,
so the recommendation is correct and yields of payoff of 1 with probability
$1/(1+b)$, and incorrect with probability $b/(1+b)$, for each realization
$b$ of $B$. Hence the expected payoff from facing a hacker is $\E(\frac{1-B}{1+B})$.
Using the calculation of $\E(1/(1+B))$ in \citet{chao1972negative},
\[
\E(\frac{1-B}{1+B})=2\E(\frac{1}{1+B})-1=\frac{2(1-(1-p)^{K})}{pK}-1,
\]
where $p=(1/2)^{N}$.

Combining the fact that $\lim_{K\to\infty}(U_{K}(0)-U(0))=0$ with
$U_{K}(q')>U(q')>U(0)$, there exists some $\underline{K}$ so that
$U_{K}(q')>U(q')>U_{K}(0)$ for every $K\ge\underline{K}$.
\end{proof}

\subsection{Proof of Proposition \ref{prop:no_true_cause}}
\begin{proof}
First, there exists some $\bar{q}_{1}>0$ so that for any noise level
$0\le q\le\bar{q}_{1}$, the hacker finds it optimal to report a covariate
$a$ that satisfies $X_{n}^{a}=Y_{n}$ for every observation $n$
in the data. Such a covariate has probability 0 of being correct but
the highest probability of being implemented out of all covariates
$a\in[0,1].$ If the hacker instead reports $\varnothing$, the expected
payoff is $1-\beta$. When $q=0$, the expected payoff from reporting
$a$ is $1-w_{\text{maven}}>1-\beta$ since $w_{\text{maven}}<\beta.$ The chance of such a covariate
passing the policymaker's test is continuous in noise level, so there
is some $\bar{q}_{1}>0$ so that for every noise level $0\le q\le\bar{q}_{1}$,
the hacker's optimal behavior involves reporting a covariate that
perfectly matches the outcome in the noisy data.

This means for $0\le q\le\bar{q}_{1},$ the principal's expected payoff
with dissemination noise $q$ when facing a hacker is $-V_{hacker}(q)=-(1-q)^{N}$,
with $-V_{hacker}'(q)=N(1-q)^{N-1}$.

For any $0\le q\le1/2,$ after the maven observes the two covariates
$a_{1},a_{2}\in[0,1]$ (one of them being $a^{*}$ and the other being
$a^{\redherr}$, and there is some $\beta$ probability that $a^{*}$ is
the true cause), it is optimal to either report the covariate $a\in\{a_{1},a_{2}\}$
that satisfies $X_{n}^{a}=Y_{n}$ for a larger number of observations
$n,$ or to report $\varnothing$. To see that it is suboptimal to
report any other covariate, note the maven knows that the correct
report is either $a_{1},a_{2},$ or $\varnothing$, and assigns some
posterior belief to each. At least one of the three option must have
a posterior belief that is at least 1/3, therefore the best option
out of $a_{1},a_{2},$ or $\varnothing$ must give an expected payoff
of at least $\frac{1}{3}w_{\text{maven}}$. On the other hand, reporting a covariate
$a\in[0,1]\backslash\{a_{1},a_{2}\}$ gives at most an expected payoff
of $1-w_{\text{maven}}.$ We have $\frac{1}{3}w_{\text{maven}}>1-w_{\text{maven}}$ by the hypothesis $w_{\text{maven}}>\frac{3}{4}$.

We show that there is some $\bar{q}_{2}>0$ so that for any noise
level $0\le q\le\bar{q}_{2}$, if in the noisy dataset we have (i)
$X_{n}^{a_{1}}=Y_{n}$ for all $n$, $X_{n}^{a_{2}}=1-Y_{n}$ for
all $n$, or (ii) $X_{n}^{a_{1}}=Y_{n}$ for all $n$, $X_{n}^{a_{2}}=1-Y_{n}$
for all except one $n$; or (iii) $X_{n}^{a_{1}}=Y_{n}$ for all except
one $n$, $X_{n}^{a_{2}}=1-Y_{n}$ for all $n$, then the maven reports
$a_{1}$. It suffices to show that for small enough $q$, in all three
cases the posterior probability of $a_{1}$ being the true cause exceeds
1/2 (so that the expected utility from reporting $a_{1}$ exceeds
that of reporting $\varnothing$). In case (i), this posterior probability
is \[
\frac{0.5\beta(1-q)^{2N}}{0.5\beta(1-q)^{2N}+0.5\beta q^{2N}+(1-\beta)(1-q)^{N}q^{N}},
\]
which converges to 1 as $q\to0.$ In case (ii), this posterior probability
is 
\[\frac{0.5\beta(1-q)^{2N-1}q}{0.5\beta(1-q)^{2N-1}q+0.5\beta q^{2N-1}(1-q)+(1-\beta)q^{N+1}(1-q)^{N-1}}. \]
Factoring out $q$ from the numerator and the denominator, this converges
to 1 as $q\to0.$ In case (iii), this posterior probability is \[ 
\frac{0.5\beta(1-q)^{2N-1}q}{0.5\beta(1-q)^{2N-1}q+0.5\beta q{}^{2N-1}(1-q)+(1-\beta)q^{N-1}(1-q)^{N+1}}.
\]
Factoring out $q$ from the numerator and the denominator, this converges
to 1 as $q\to0.$

The principal's expected payoff from facing the maven is the probability
that a true cause exists in the data and the maven reports $a^{*}$.
This is because the maven either reports $\varnothing$ (so the principal
gets 0), or reports a covariate that is either the true cause or gets
rejected by the policymaker. When $q=0$, the principal's expected
payoff is $\beta.$ A lower bound on the principal's payoff for $0\le q\le\bar{q}_{2}$
is $L(q):=\beta\cdot\mathbb{P}[\text{noise level \ensuremath{q} flips 0 or 1 entries in }X_{n}^{a^{*}},X_{n}^{a^{\redherr}},1\le n\le N]$.
If $a^{*}$ is the true cause and the noise flips no more than 1 entry
in $X_{n}^{a^{*}},X_{n}^{a^{\redherr}}$, then the maven sees one of cases
(i), (ii), or (iii) in the noisy data, and by the argument before
the maven will report $a^{*}$ if $q\le\bar{q}_{2}.$ Note this lower
bound is equal to the principal's expected payoff when $q=0$.

We have 
\[
L(q)=\beta\cdot(1-q)^{2N}+2N\cdot(1-q)^{2N-1}\cdot q.
\]
 The derivative is: 
\[
L'(q)=\beta\cdot[-2N(1-q)^{2N-1}+2N\cdot(1-q)^{2N-1}-2N\cdot(2N-1)\cdot(1-q)^{2N-2}\cdot q]
\]
so $L'(0)=0.$ We have that $L(q)-V_{hacker}(q)$ is a lower bound
on the principal's expected payoff with dissemination noise $q$ for
all $0\le q\le\min(\bar{q}_{1},\bar{q}_{2}),$ and the bound is equal
to the expected payoff when $q=0.$ We have $L'(0)-V_{hacker}'(0)>0,$
therefore there exists some $0<\bar{q}<\min(\bar{q}_{1},\bar{q}_{2})$
so that the lower bound on payoff $L(q)-V_{hacker}(q)$ is strictly
increasing up to $\bar{q}$. This shows any noise level $0<q<\bar{q}$
is strictly better than zero noise for the principal. 
\end{proof}

\subsection{Proof of Proposition \ref{prop:dynamic}}
\begin{proof}
Define 1 minus the state, $f=1-b.$ Define $u(q,f)$ as the principal's
expected utility today from releasing testing set with noise level
$q$ when the hacker's best guess has $1-f$ chance of being a bait
in the raw dataset. We are studying the Bellman equation 
\[
v(f)=\max\{u(q,f)+\da v\left(\frac{f(1-q)}{f(1-q)+(1-f)q}\right):q\in[0,1/2]\}
\]

First we argue that $v:[0,1]\to\mathbb{R}$ is monotone decreasing
and convex. Let $C_{B}([0,1])$ denote the set of continuous bounded
functions on $[0,1]$. Recall that $v$ is the unique fixed point
of the Bellman operator $T:C_{B}([0,1])\to C_{B}([0,1])$, with 
\[
Tw(f)=\max\{u(q;b)+\da w\left(\frac{bq}{(1-b)(1-q)+bq}\right):q\in[0,1/2]\}.
\]
Observe that $b\mapsto\frac{bq}{(1-b)(1-q)+bq}$ is concave when $q\leq1/2$
(its second derivative is $\frac{q(1-q)(2q-1)}{[(1-b)(1-q)+bq]^{3}}$).
Then when $w$ is convex and monotone decreasing, so is $b\mapsto w\left(\frac{bq}{(1-b)(1-q)+bq}\right)$,
as the composition of a concave function and a monotone decreasing
convex function is convex. Finally, $Tw$ is convex because $f\mapsto u(q,f)$
is convex (linear), and $Tw$ thus is the pointwise maximum of convex
functions. So $Tw$ is monotone decreasing. The fixed point $v$
of $T$ is the limit of $T^{n}w$, starting from any monotone decreasing
and convex $w\in C_{B}([0,1])$, so $v$ is monotone decreasing and
convex.

Observe that $f\leq\frac{f(1-q)}{f(1-q)+(1-f)q}=\ta(q,f)$, so along
any path $(q_{t},f_{t})$, $f_{t}$ is monotone (weakly) increasing.
In consequence, if $f_{t}$ is large enough, $f_{t'}$ will be large
enough for all $t'\geq t$.

Recall that 
\[
u(q,f)=(1-h)[1-q]-h[f(1-q)+(1-f)q],
\]
so 
\[
\partial_{q}u(q,f)=-1+2hf<0
\]
as $h<1/2$. Hence, we have that 
\[
u(0,f)=1-h-hf\geq u(q,f)\geq -0.5(1-h) -h(1/2)=u(1/2,f).
\]
Note that $\ta(1/2,f)=f$, so that 
\[
v(f)\geq\frac{1-h-1/2}{1-\da}.
\]

We proceed to show that $q_{t}<1/2$. Observe that if, for some $f_{t}$
it is optimal to set $q_{t}=1/2$ then $f_{t+1}=\ta(q_{t},f_{t})=f_{t}$,
and it will remain optimal to set $q_{t+1}=1/2$. This means that,
if it is optimal to set $q=1/2$ for $f$, then $v(f)=\frac{1-h-1/2}{1-\da}.$
Since $h<1/2$, $u$ is strictly decreasing in $q$. So there is a
gain in decreasing $q$ from $1/2$, which will result in transitioning
to $f'=\ta(q,f)>f=\ta(1/2,f)$. But recall that $\frac{1-h-1/2}{1-\da}$
is a lower bound on $v$. So $v(f')\geq v(f)$.
Hence, 
\[
\begin{split}u(q,f)+\da v(f')-[u(1/2,f)+\da v(f)]=(2hf-1)(q'-(1/2))+\da(v(f')-v(f))\\
\geq(2hf-1)(q'-(1/2))>0.
\end{split}
\]

Now we show that for $f$ large enough, but bounded away from 1, it
is optimal to set $q=0$. Given that $v$ is convex, it has a subdifferential:
for any $f$ there exists $\partial v(f)\in\mathbb{R}$ with the property
that $v(f')\geq v(f)+\partial v(f)(f'-f)$ for all $f'$. Since $v$
is monotone decreasing, $\partial v(f)\leq0$. Moreover, we can choose
a subdifferential for each $f$ so that $f\mapsto\partial v(f)$ is
monotone (weakly) increasing.

Let $q'<q$. Suppose that $q$ results in $f'=\ta(q,f)$ and $q'$
in $f''=\ta(q',f)$. The function $\ta$ is twice differentiable,
with derivatives 
\[
\partial_{x}\ta(x,f)=\frac{-f(1-f)}{[f(1-x)+x(1-f)]^{2}}\text{ and }\partial_{x}^{2}\ta(x,f)=\frac{2f(1-f)(1-2f)}{[f(1-x)+x(1-f)]^{2}}.
\]
Hence, $q\mapsto\ta(q,f)$ is concave when $f\geq1/2$.

Now we have: 
\begin{align*}
u(q',f)+\da v(f'')-[u(q,f)+\da v(f')] & =(2hf-1)(q'-q)+\da(v(f'')-v(f'))\\
 & \geq(2hf-1)(q'-q)+\da\partial v(f')(f''-f')\\
 & \geq(2hf-1)(q'-q)+\da\partial v(f')\partial_{q}\ta(q,f)(q'-q)\\
 & >\left[(1-2h)+\da\underbrace{\partial v(f')\frac{f(1-f)}{[f(1-q)+(1-f)q]^{2}}}_{A}\right](q-q'),\\
\end{align*}
where the first inequality uses the definition of subdifferential,
and the second the concavity of $\ta$, so that $f''-f'\leq\partial\ta(q,f)(q'-q'')$,
and the fact that $\partial v(f')\leq0$. The last inequality uses
that $f<1$. Recall that $1-2h>0$.

For $f$ close enough to $1$, and since $\partial v(f')\leq0$ are
monotone increasing and therefore bounded below, we can make $A$
as close to zero as desired. Thus, for $f<1$ close to $1$, we have
that $u(q',f)+\da v(f'')-u(q,f)+\da v(f')>0$ when $q'<q$. Hence
the solution will be to set $q=0$.

To finish the proof we show that $f_{t}\uparrow1$ and hence there
is $t^{*}$ at which $f_{t}$ is large enough that it is optimal to
set $q_{t}=0$.

Suppose that $f_{t}\uparrow f^{*}<1$. Note that if $f'=\ta(q,f)$
then $q=\frac{f(1-f')}{f(1-f')+(1-f)f'}.$ Thus (using $K$ for the
terms that do not depend on $q$ or $f$) 
\[
u(q_{t},f_{t})=K-hf_{t}-(1-2hf_{t})\left[\frac{f_{t}(1-f_{t+1})}{f_{t}(1-f_{t+1})+(1-f_{t})f_{t+1}}.\right]\rightarrow K-hf^{*}-(1-2hf^{*})\frac{1}{2}=K-\frac{1}{2}.
\]
Then for any $\ep$ there is $t$ such that $v(f_{t})=\sum_{t'\geq t}\da^{t'-t}u(q_{t'},f_{r'})<\frac{K-\frac{1}{2}}{1-\da}+\ep$.

On the other hand, if the principal sets $q_{t}=0$ it gets $u(0,f_{t})=K-hf_{t}$,
and transitions to $1=\ta(0,f_{t})$. Hence the value of setting $q=0$
at $t$ is 
\[
u(0,f_{t})+\da\frac{u(0,1)}{1-\da}=K-hf_{t}+\da\frac{K-h}{1-\da}>\frac{K-h}{1-\da}>\frac{K-1/2}{1-\da}.
\]
as $f_{t}\leq f^{*}<1$ and $h<1/2$.

Now choose $\ep$ such that $\frac{K-\frac{1}{2}}{1-\da}+\ep<\frac{K-h}{1-\da}$.
Then for $t$ large enough we have 
\[
v(f_{t})<u(0,f_{t})+\da\frac{u(0,1)}{1-\da},
\]
a contradiction because setting $q_{t}=0$ gives the principal a higher
payoff than in the optimal path.
\end{proof}

\end{document}